\newtheorem{theorem}{Theorem}
\newtheorem{Corollary}{Corollary}
\newtheorem{corollary}[Corollary]{$\mathbf{Corollary}$}
\newcounter{problem}
\newcounter{save@equation}
\newcounter{save@problem}
\begin{document}
\title{ {  A New QoS-Guarantee     Strategy for NOMA Assisted Semi-Grant-Free Transmission  }}

\author{ Zhiguo Ding, \IEEEmembership{Fellow, IEEE}, Robert Schober, \IEEEmembership{Fellow, IEEE}, and H. Vincent Poor, \IEEEmembership{Life Fellow, IEEE}    \thanks{ 
  
\vspace{-2em}

    Z. Ding and H. V. Poor are  with the Department of
Electrical Engineering, Princeton University, Princeton, NJ 08544,
USA. Z. Ding
 is also  with the School of
Electrical and Electronic Engineering, the University of Manchester, Manchester, UK (email: \href{mailto:zhiguo.ding@manchester.ac.uk}{zhiguo.ding@manchester.ac.uk}, \href{mailto:poor@princeton.edu}{poor@princeton.edu}).
R. Schober is with the Institute for Digital Communications,
Friedrich-Alexander-University Erlangen-Nurnberg (FAU), Germany (email: \href{mailto:robert.schober@fau.de}{robert.schober@fau.de}).

  }\vspace{-3em}}
 \maketitle
 
\begin{abstract}  
Semi-grant-free (SGF) transmission  has recently received significant attention due to its capability to accommodate massive connectivity and reduce access delay by admitting grant-free users to   channels which would otherwise be solely occupied by grant-based users. In this paper, a new    SGF transmission scheme that exploits the flexibility in choosing the  decoding order  in non-orthogonal multiple access (NOMA) is proposed. Compared to existing SGF schemes, this new scheme can ensure that admitting the grant-free users is  completely transparent to the grant-based users, i.e., the grant-based users' quality-of-service experience is guaranteed to be the same as for  orthogonal multiple access. In addition, compared to existing SGF schemes,  the proposed SGF scheme can   significantly improve    the robustness of the grant-free users' transmissions and effectively avoid outage probability    error floors. To facilitate the performance evaluation of the proposed SGF transmission scheme, an exact expression for the outage probability   is obtained and an asymptotic analysis  is conducted to show that the achievable multi-user diversity gain is proportional to the number of participating   grant-free users. Computer simulation results  demonstrate the performance of the proposed SGF transmission scheme and   verify the accuracy of the developed analytical results.  
\end{abstract} \vspace{-1em}

\section{Introduction}
The next generation Internet of Things (NGIoT) is envisioned to be an important use case for beyond 5G mobile networks \cite{ngiot}. The key challenge for supporting NGIoT is, given the scarce radio spectrum, how to support a massive number of devices, each of which     might   send a small number of packets only. For this emerging application, conventional grant-based transmission is not suitable, since the amount of signalling needed for handshaking could exceed   the amount of data sent by the devices.  This motivates the development  of  grant-free transmission schemes, which grant  the devices  access without   lengthy handshaking protocols \cite{6933472}. Most existing grant-free schemes  can be categorized  into three groups. The first group applies  random access protocols originally developed for computer networks \cite{randomaccess1}, the second  group   relies on the excess spatial degrees of freedom offered by multiple-input multiple-output (MIMO) techniques \cite{8454392,8734871}, and the third group employs  non-orthogonal multiple access (NOMA) which encourages spectrum sharing among the devices   \cite{mojobabook,8419284,8703780,7976275,8533378}. It is noted  that  there are many works which use  a combination of the three types of grant-free schemes  and hence can potentially offer a significant performance improvement in terms of connectivity and transmission robustness \cite{8674774,8719976,jsacnoma10}. 

In this paper, we focus on a special case of NOMA based grant-free transmission, termed semi-grant-free (SGF) transmission \cite{8662677}. Unlike the aforementioned pure grant-free schemes, SGF transmission does not assume that a certain number of resource blocks, such as time slots or subcarriers, are reserved for contention among the grant-free users, since this assumption would put a strict cap on the number of grant-free users which can be served, particularly if the base station has a limited number of antennas and cannot use massive MIMO to improve connectivity.  The key idea of SGF transmission is to opportunistically admit grant-free users to those resource blocks  which would otherwise be solely occupied by grant-based users. An immediate advantage of SGF over conventional grant-free schemes is that the number of grant-free users is constrained not by the number of resource blocks reserved for  grant-free transmission, but by the total number of resource blocks available in the system.   Take an orthogonal frequency-division multiple access (OFDMA) system with $128$ subcarriers as an example. If  only $8$ subcarriers are reserved for grant-free transmission, at most $8$ grant-free users can be served, but the use of SGF transmission  can potentially provide service to     $120$ additional  grant-free users. 

In SGF transmission, a crucial task is how  to guarantee a grant-based user's quality of service   (QoS) experience when    admitting grant-free users to the same resource block.  
In \cite{8662677}, two SGF transmission schemes, termed SGF Scheme I and Scheme II, were developed to realize this goal.   In particular, SGF Scheme I  requires  the base station to    decode the grant-based user's signal  first by treating the grant-free users' signals as interference, and schedules   grant-free users with weak channel conditions   in order to limit the interference they cause to the grant-based user.   Therefore,  Scheme I is ideal for  situations, where the grant-free users are cell-edge users, i.e., their connections to the base station are weaker than that of the grant-based user. SGF Scheme II schedules grant-free users with strong channel conditions, and requires the base station to  decode the grant-free users' signals first. Therefore, Scheme II is ideal for   situations, where the grant-free users' connections to the base station are strong.  Two types of distributed contention control were applied in \cite{8662677} to reduce the system overhead and to control the number of admitted grant-free users \cite{Zhao2005s, Bletsas06, 6334506}. 

In this paper, we consider the same   grant-free communication scenario as in \cite{8662677}, i.e.,   one grant-based user and $M$ grant-free users communicate in one resource block with the same base station. A new SGF transmission scheme is proposed which can be interpreted   as  an opportunistic  combination  of the two existing SGF schemes and offers the following three advantages:
\begin{itemize}
\item   Recall that SGF Scheme I decodes the grant-based user's signal first  by treating the grant-free users' signals as interference. Hence, it is inevitable that the grant-based user's QoS experience  is negatively affected by the admission of the grant-free users into the channel.  The new SGF scheme can strictly guarantee  that  admitting grant-free users is transparent to the grant-based user, and the grant-based user's QoS experience is the same as   when it   occupies   the channel along.
 
 \item Recall that SGF Scheme II   directly decodes the grant-free users' signals by treating the grant-based user's signal as interference, which means that interference always exists for the grant-free users. Hence, for SGF Scheme II,  the data rates available for the grant-free users  can be small. The new SGF scheme can realize interference-free transmission for the grant-free users,   and hence can offer  significantly improved   achievable data rates for grant-free transmission. 
 
 \item For both existing SGF schemes, their outage probabilities exhibit   error floors, when there is no   transmit power control, e.g.,   the grant-free and grant-based users increase their transmit powers   without coordination. Take Scheme I as an example, which  decodes the grant-based user's signal first and then decodes the grant-free users' signals via successive interference cancellation (SIC).  An     outage probability error floor exists because increasing the grant-free users' transmit powers might help the second stage of SIC but increases the outage probability in the first stage. A similar error floor exists for Scheme II. The new SGF scheme can effectively avoid these    error floors and significantly improve transmission robustness, even without careful power control among the users.  
 
 \end{itemize}
 
 In order to facilitate the performance analysis, an exact expression for the outage probability achieved by the proposed SGF transmission scheme is developed based on  order statistics. Because the outage probability achieved by the proposed SGF scheme can be a function of four random variables, including three dependent order statistics,  the developed exact expression has an involved form and hence   cannot provide much   insight into the properties of  SGF transmission.  Therefore, two high SNR approximations are developed based on an asymptotic analysis of  the derived  exact expression. The asymptotic expressions   demonstrate  that the proposed SGF transmission scheme   avoids an outage probability error floor   and realizes a multi-user diversity gain of $M$. 

The remainder  of the paper is organized as follows. In Section \ref{section II}, the existing SGF schemes are briefly introduced first, and then, the proposed new SGF   scheme is described. In Section \ref{section III}, the outage performance achieved by the proposed SGF transmission scheme is analyzed, where   an asymptotic analysis is also conducted to illustrate the multi-user diversity gain realized by the proposed scheme.   Computer simulations   are provided in Section \ref{section IV}, and   Section \ref{section V} concludes the paper. We collect the details of all  proofs   in the appendix.

\section{Existing and Newly Proposed   SGF Schemes}  \label{section II}
Consider an  SGF communication scenario, where $M$ grant-free users compete with each other for admission  to a resource block which would otherwise  be solely occupied by a grant-based user for conventional orthogonal multiple access (OMA).  Denote the grant-based user's channel gain  by $g$, and the grant-free users' channel gains by $h_m$, $1\leq m \leq M$. We assume that the SGF system  operates in quasi-static Rayleigh fading environments, i.e., all the channel gains are  complex Gaussian distributed with zero mean and unit variance. Without loss of generality, we also assume that the grant-free users' channel gains are ordered as follows:
\begin{align}\label{channel order}
|h_1|^2\leq \cdots\leq |h_M|^2.
\end{align}
We note that this ordering assumption is to facilitate the performance analysis, and that this information is not available to any of the nodes in the system, including the base station.
Prior to transmission, we assume that the grant-free users can overhear the information exchange between the grant-based user and the base station, and hence know the grant-based user's channel state information (CSI) as well as the grant-based user's transmit power, denoted by $P_0$. In addition, each grant-free user has acquired the knowledge of  its own CSI, by exploiting  the pilot signals broadcasted by the base station.

\subsection{ Two Existing SGF Schemes}
SGF Scheme I in \cite{8662677} requires  the base station to decode the grant-based user's signal during the first stage of SIC. If grant-free user $m$ is admitted to the channel, the grant-based user's achievable data rate is  $\log\left(1+\frac{P_0|g|^2}{P_s|h_m|^2+1}\right)$, and grant-free user $m$'s data rate is $\log(1+P_s|h_m|^2)$ if the first stage of SIC is successful, where   $P_s$ denotes the transmit power of the    grant-free users,  and the noise power is assumed to be normalized to one.  In order to guarantee   the grant-based user's QoS requirement,  the base station broadcasts   a predefined threshold, denoted by $\tau_I$, and only the grant-free users whose channel gains fall below the threshold participate in contention. In this way,   a user which has a strong channel   and hence can cause strong interference to the grant-based user will not be granted access.

The use of distributed contention control ensures that contention can be carried out in a distributed manner. Thereby,  each grant-free user's backoff period is proportional to its channel gain and therefore the user with the weakest channel gain will be granted access\footnote{In this paper, we focus on the case, where  a single grant-free user is admitted to the channel. However, as discussed in \cite{8662677} and \cite{6334506}, more than one user can be granted access via distributed contention, which is beyond the scope of this paper.  }.  Therefore, for SGF Scheme I, the admitted grant-free user's data rate    is given by
\begin{align}
R_{I} = \left\{\begin{array}{ll} \log(1+P_s|h_1|^2), &{\rm if} \text{ } |h_1|^2\leq \tau_I\quad \& \quad \log\left(1+\frac{P_0|g|^2}{P_s|h_1|^2+1}\right)>R_0\\ 0, &{\rm otherwise}\end{array}\right.,
\end{align}
where $R_0$ denotes the grant-based user's target data rate.

SGF Scheme II in \cite{8662677} requires  the base station to decode a grant-free user's signal during the first stage of SIC. Similar to Scheme I, the base station broadcasts a threshold, denoted by $\tau_{II}$, and only the grant-free users whose channel gains are stronger than the threshold participate in contention. By using distributed contention control, each participating grant-free user sets its backoff time inversely propotional to its channel gain, which means that the grant-free user with the strongest channel condition is granted access, and its   achievable data rate is given by
\begin{align}
R_{II} = \left\{\begin{array}{ll} \log\left(1+\frac{P_s|h_M|^2}{P_0|g|^2+1}\right), &{\rm if} \text{ } |h_M|^2\geq \tau_{II} \\ 0, &{\rm otherwise}\end{array}\right.. 
\end{align}
 
{\it Remark 1:} We note that,  when $P_s\rightarrow \infty$ and $P_0\rightarrow \infty$, there is an error floor for the admitted grant-free user's outage probability. Take SGF Scheme II as an example. $\log\left(1+\frac{P_s|h_M|^2}{P_0|g|^2+1}\right)$ becomes a constant, when $P_s\rightarrow \infty$ and $P_0\rightarrow \infty$, which means that there will be an error floor for the outage probability suffered  by Scheme II. This error floor can be reduced, if $P_s$ is much larger than $P_0$.  In other words, the   existing SGF schemes require careful power control to guarantee  the grant-free user's target outage performance, which might not be possible in practice.  

\subsection{Proposed   SGF Scheme} 
  In the proposed SGF scheme, prior to transmission, the base station broadcasts a threshold, denoted by $\tau(|g|^2)$, which needs to ensure the following inequality:
 \begin{align}
\label{tau1}
\log\left(1+\frac{P_0|g|^2}{\tau(|g|^2)+1}\right)\geq R_0.
\end{align}
The proposed SGF scheme  chooses   $\tau(|g|^2)$ such that the above  inequality constraint holds with equality: 
\begin{align}
\tau(|g|^2) =\max\left\{0, \frac{P_0|g|^2}{2^{R_0}-1}-1\right\},
\end{align}
where $\max (a,b)$ denotes the maximum of $a$ and $b$.

 Upon receiving this threshold, each grant-free user compares its channel gain with the threshold individually. Unlike the two existing schemes, the proposed SGF scheme allows all the grant-free users to participate in contention. Each user's backoff time is determined by how its channel gain compares  to $\tau(|g|^2)$, as shown in the following:
  
  \begin{itemize}  
  \item Group $1$  contains the users  whose   channel gains are  above the threshold, i.e., $P_s|h_m|^2>\tau(|g|^2)$. If a user in Group $1$ is granted access, its signal has to be detected during the first stage of SIC. Otherwise, $P_s|h_m|^2>\tau(|g|^2)$ leads to $\log\left(1+\frac{P_0|g|^2}{P_s|h_m|^2+1}\right)<R_0$, which would mean  that the grant-based user's signal cannot be decoded correctly \footnote{In this paper, the grant-free users are assumed to use the same   fixed transmit power, $P_s$. The use of distributed power control can further improve the performance of SGF transmission without increasing system overhead, but is beyond the scope of this paper.    }.   Therefore, if a user in Group $1$ is granted access, its achievable data rate is $\log\left(1+\frac{P_s|h_m|^2}{P_0|g|^2+1}\right) $, and hence   its backoff time is set to be inversely proportional to this achievable rate.  
  
    \item Group $2$ contains the users  whose   channel gains are below the threshold, i.e., $P_s|h_m|^2<\tau(|g|^2)$. For a user in Group $2$, its signal can be decoded in either one of the two SIC stages,  without affecting the grant-based user's QoS. In particular, if its signal is decoded in the first stage of SIC,  its achievable data rate is $ \log\left(1+\frac{P_s|h_m|^2}{P_0|g|^2+1}\right) $.  If its signal is decoded in the second stage of SIC,   its achievable data rate is $\log(1+P_s|h_m|^2)$.  We note that if a user   from Group $2$ is granted access, it is guaranteed that  the grant-based user's signal can be successfully decoded in the first stage of SIC,  since $P_s|h_m|^2<\tau(|g|^2)$ leads to $\log\left(1+\frac{P_0|g|^2}{P_s|h_m|^2+1}\right)>R_0$. In other words, $\log(1+P_s|h_m|^2)$ is always achievable for a user from Group $2$.  Therefore,  its backoff time is set to be inversely proportional to $ \log(1+P_s|h_m|^2) $, since  $ \log(1+P_s|h_m|^2)\geq  \log\left(1+\frac{P_s|h_m|^2}{P_0|g|^2+1}\right) $. 
  \end{itemize}
  
 By carrying out distributed contention control \cite{Zhao2005s, Bletsas06, 6334506}, either a user from Group 1 with the largest $\log\left(1+\frac{P_s|h_m|^2}{P_0|g|^2+1}\right) $ or a user from Group 2 with the largest $ \log(1+P_s|h_m|^2) $ is granted access in a distributed manner. 

{\it Remark 2:} The proposed  SGF scheme can be viewed as a hybrid version of the two existing schemes. In particular, under the condition that admitting a   grant-free user needs to be transparent to the grant-based user, the users in Group 1 can support SGF Scheme II, whereas the users in Group 2 can support either of the two schemes.  The proposed scheme will select the grant-free user with the largest achievable data rate in an opportunistic manner .  

{\it Remark 3: } We note that, among the $M$ grant-free users, only two users have the chance of being granted access, if the grant-free users' channel gains are ordered as in \eqref{channel order}. One   is grant-free user $M$, if Group $1$ is not empty, since $\log\left(1+\frac{P_s|h_m|^2}{P_0|g|^2+1}\right) \leq \log\left(1+\frac{P_s|h_j|^2}{P_0|g|^2+1}\right) $  always holds for any $m\leq  j$. The other one is the grant-free user which has the strongest channel gain in Group $2$,  if Group $2$ is not empty, since $ \log(1+P_s|h_m|^2) \leq  \log(1+P_s|h_j|^2) $ for $m\leq j$. If a grant-based protocol is used, i.e.,  global CSI is available at the base station, the base station can    decide which user is to be admitted by simply comparing the two users' data rates. The use of the proposed distributed contention  control can ensure that the same goal is achieved without acquiring global CSI at the base station. 
 
\section{Outage Performance Analysis} \label{section III}
It is straightforward to show that the use of the proposed SGF scheme can strictly guarantee that admitting grant-free users is completely transparent to the grant-based user, and the grant-based user's experience is the same as with OMA. Therefore, in this paper, we mainly focus on the outage performance of the admitted grant-free user, where we assume that all the grant-free users have the same target data rate, denoted by $R_s$.  

 To characterize the outage event,    denote the event that there are $m$ users in Group 2 by $E_m $, where $E_m$ can be   explicitly defined  as follows:
\begin{align}
E_m = \left\{ |h_{m}|^2<\frac{\tau(|g|^2)}{P_s},    |h_{m+1}|^2> \frac{\tau(|g|^2)}{P_s}  \right\}   ,
\end{align}
for   $1\leq m \leq M-1$. Furthermore, the two extreme cases with no user in Group 1 and Group 2 can be defined  as $E_M=\left\{ |h_{M}|^2<\frac{\tau(|g|^2) }{P_s}\right\}$ and $E_0 = \left\{ |h_{1}|^2>\frac{\tau(|g|^2)}{P_s}   \right\}   $, respectively. 

The overall outage probability experienced by the admitted grant-free users is given by 
\begin{align}\nonumber
{\rm P}_{out} = & \sum^{M-1}_{m=1}{\rm P}\left(E_m,  \max\left\{ R_{k,I} ,1\leq k\leq m \right\}<R_s,\max\left\{R_{k,II} ,m< k\leq M \right\}<R_s  \right)\\  &+ {\rm P}\left(E_M,   \max\left\{ R_{k,I} ,1\leq k\leq M \right\}<R_s\right) \label{out}
+ {\rm P}\left(E_0, \max\left\{R_{k,II},1\leq k\leq M \right\}<R_s\right),
\end{align}
where $R_{k,I} = \log\left(1+P_s|h_k|^2\right)$ and $R_{k,II}=\log\left(1+\frac{P_s|h_k|^2}{P_0|g|^2+1}\right)$.  

Because the grant-free users' channel gains are ordered as in \eqref{channel order}, the outage probability can be simplified as follows:
\begin{align} \label{out overall}
{\rm P}_{out} = & \sum^{M}_{m=1}{\rm P}\left(E_m,     \max \left\{R_{m,I},R_{M,II} \right\}<R_s\right)+ {\rm P}\left(E_0,  R_{M,II} <R_s\right). 
\end{align}

 Define $\epsilon_0=2^{R_0}-1$, $\epsilon_s=2^{R_s}-1$, $\alpha_0=\frac{\epsilon_0}{P_0}$, and $\alpha_s=\frac{\epsilon_s}{P_s}$.  
We note that if $|g|^2<\alpha_0$,
\begin{align}\label{fact1}
\tau(|g|^2) =\max\left\{0, \frac{P_0|g|^2}{2^{R_0}-1}-1\right\}=0. 
\end{align}

By using \eqref{fact1}, the outage probability can be rewritten as follows: 
\begin{align}\nonumber
{\rm P}_{out} =&\sum^{M}_{m=1}\underset{Q_m}{\underbrace{{\rm P}\left(E_m,|g|^2>\alpha_0,  \max \left\{R_{m,I}, R_{M,II}\right\} <R_s\right)}}\\  &+
 \underset{Q_0}{\underbrace{{\rm P}\left(E_0,|g|^2>\alpha_0,   R_{M,II}  <R_s\right)}}  + \underset{Q_{M+1}}{\underbrace{{\rm P}\left(R_{M,II}<R_s , |g|^2<\alpha_0\right)}}.\label{overall}
\end{align} 

The following theorem provides an exact expression for the outage probability achieved by the proposed SGF scheme. 
\begin{theorem}\label{theorem1}
Assume that $\epsilon_s\epsilon_0<1$ and $M\geq 2$. The outage probability achieved by the proposed SGF transmission scheme can be expressed as follows:
 \begin{align}\nonumber
{\rm P}_{out} =&\sum^{M-2}_{m=1}     
\bar{\eta}_m  \sum^{M-m}_{l=0}{M-m \choose l}(-1)^l  \sum^{m}_{p=0}{m \choose p}(-1)^p   \tilde{\mu}_4  \phi(p,\tilde{\mu}_2)  
  \\\nonumber &+\sum^{M-1}_{i=0}{M-1 \choose i}  \frac{(-1)^i\tilde{\eta}_0}{M-1} \left(
  e^{\frac{ 1}{P_s} } \phi(i,\mu_7)   - e^{-\alpha_s } 
\phi(i,\mu_8)
 \right)
 \\\nonumber &
 +     \frac{ \tilde{\eta}_0}{M(M-1)} \sum^{M}_{l=0}(-1)^l  {M \choose l}e^{ - l\alpha_s  }  
e^{\frac{M-l}{P_s}   } g_{\tilde{\mu}_{12}}(\alpha_0,\alpha_2) 
+\sum^{M}_{i=0}{M\choose i} (-1)^i e^{\frac{ i}{P_s}}g_{\frac{i }{\alpha_0P_s}}\left(\alpha_0,\alpha_1\right)
 \\  
 &+\left(1-e^{-  \alpha_s} \right) ^M  e^{- \alpha_1 } 
+\sum^{M}_{i=0}{M\choose i}(-1)^ie^{- i\alpha_s }
 \frac{1-e^{-\left(1+ i\alpha_sP_0 \right)\alpha_0 } }{1+ i\alpha_sP_0 }\label{overalxx},
\end{align}
 where $\bar{\eta}_m = \frac{M!}{m!(M-m)!}$, $\tilde{\eta}_0 = \frac{M!}{(M-2)!}$,  $\tilde{\mu}_2 =  l\alpha_sP_0+(M-m-l) \frac{\epsilon_0^{-1}P_0}{P_s}  $,    $\tilde{\mu}_4 = e^{-l\alpha_s+(M-m-l)\frac{ 1}{P_s} }
$,   $\mu_7=\frac{1}{P_s\alpha_0}$, $\mu_8=\alpha_sP_0$, 
 $\tilde{\mu}_{12}= l\alpha_sP_0+ (M-l)\frac{\alpha_0^{-1}}{P_s} $, $\alpha_1 =(1+\epsilon_s)\alpha_0 $, $\alpha_2=\frac{\epsilon_0(\epsilon_s+1)}{(1-\epsilon_0\epsilon_s)  P_0}$, 
$g_{\mu}(x_1, x_2) = \frac{e^{-(1+\mu) x_1}-e^{-(1+\mu) x_2}}{1+\mu}$, and $\phi(p,\mu) = e^{-p\alpha_s}g_{\mu}(\alpha_1,\alpha_2)   +e^{\frac{ p}{P_s}}  g_{\mu+\frac{ p}{P_s\alpha_0}}(\alpha_0,\alpha_1)$. 
\end{theorem}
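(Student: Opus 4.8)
The plan is to evaluate each of the probabilities $Q_0,Q_1,\dots,Q_M,Q_{M+1}$ appearing in \eqref{overall} separately, in every case by conditioning on $|g|^2=x$ (exponential with unit mean) and reducing to a one-dimensional integral over $x$. For $x>\alpha_0$ one has $\tau(|g|^2)=\frac{P_0x}{\epsilon_0}-1$, so the Group~1/Group~2 boundary lies at $|h_m|^2=t(x):=\frac{P_0x}{\epsilon_0P_s}-\frac1{P_s}$, and the outage conditions read $R_{k,I}<R_s\Leftrightarrow|h_k|^2<\alpha_s$ and $R_{M,II}<R_s\Leftrightarrow|h_M|^2<\alpha_s(P_0x+1)$. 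Two identities will be used throughout: $e^{-t(x)}=e^{1/P_s}e^{-x/(\alpha_0P_s)}$ (since $P_0/\epsilon_0=1/\alpha_0$) and $e^{-\alpha_s(P_0x+1)}=e^{-\alpha_s}e^{-\alpha_sP_0x}$. Because the $|h_m|^2$ are ordered exponentials, the $h$-randomness in each term is handled by the elementary identity that, for constants $a\le t\le b$, ${\rm P}(|h_m|^2<a,\ |h_{m+1}|^2>t,\ |h_M|^2<b)=\binom{M}{m}F(a)^m(F(b)-F(t))^{M-m}$ with $F(y)=1-e^{-y}$, the $\binom{M}{m}$ subset choices being disjoint precisely because $a\le t$ pins down which gains are ``small''.

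For $1\le m\le M-1$, $Q_m$ forces the $m$ smallest gains below $\min\{t(x),\alpha_s\}$ and the $M-m$ largest gains into $\big(t(x),\alpha_s(P_0x+1)\big)$; this interval is non-empty iff $t(x)<\alpha_s(P_0x+1)$, which, using $\epsilon_s\epsilon_0<1$, is equivalent to $x<\alpha_2$, while $t(x)<\alpha_s$ iff $x<\alpha_1$. Hence the conditional probability is $\binom{M}{m}F(t(x))^m(e^{-t(x)}-e^{-\alpha_s(P_0x+1)})^{M-m}$ on $(\alpha_0,\alpha_1)$, $\binom{M}{m}F(\alpha_s)^m(e^{-t(x)}-e^{-\alpha_s(P_0x+1)})^{M-m}$ on $(\alpha_1,\alpha_2)$, and zero beyond $\alpha_2$. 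I would expand $F(\cdot)^m$ by the binomial theorem with index $p$ and $(e^{-t(x)}-e^{-\alpha_s(P_0x+1)})^{M-m}$ with index $l$, substitute the two identities above, and integrate term by term: every $x$-integral becomes $\int e^{-(1+\mu)x}\,dx$ over $(\alpha_0,\alpha_1)$ or $(\alpha_1,\alpha_2)$, i.e.\ $g_\mu(\alpha_0,\alpha_1)$ or $g_\mu(\alpha_1,\alpha_2)$, and the pair of sub-intervals is exactly what $\phi(p,\mu)$ packages. Collecting the prefactors reproduces $\sum_{m=1}^{M-2}Q_m$ in the stated form with the composite decay rate $\tilde\mu_2$ and constant $\tilde\mu_4$; for $m=M-1$ the factor $(e^{-t(x)}-e^{-\alpha_s(P_0x+1)})^{M-m}$ is linear, so the $l$-sum has only two terms and collapses to the cleaner expression involving $\mu_7=\frac1{P_s\alpha_0}$ and $\mu_8=\alpha_sP_0$; together these give $\sum_{m=1}^{M-1}Q_m$.

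The three boundary terms are analogous but shorter. For $Q_0$ (Group~2 empty) all $M$ gains lie in $\big(t(x),\alpha_s(P_0x+1)\big)$, so $Q_0=\int_{\alpha_0}^{\alpha_2}(e^{-t(x)}-e^{-\alpha_s(P_0x+1)})^Me^{-x}\,dx$, which expands into the $g_{\tilde\mu_{12}}(\alpha_0,\alpha_2)$ term. For $Q_M$ (Group~1 empty, $x>\alpha_0$) the only surviving constraint is $|h_M|^2<\min\{t(x),\alpha_s\}$, so $Q_M=\int_{\alpha_0}^{\alpha_1}F(t(x))^Me^{-x}\,dx+\int_{\alpha_1}^{\infty}F(\alpha_s)^Me^{-x}\,dx$, producing $\sum_i\binom{M}{i}(-1)^ie^{i/P_s}g_{i/(\alpha_0P_s)}(\alpha_0,\alpha_1)$ together with $(1-e^{-\alpha_s})^Me^{-\alpha_1}$. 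For $Q_{M+1}$ one has $x<\alpha_0$, hence $\tau(|g|^2)=0$, all users are in Group~1, the outage event is simply $R_{M,II}<R_s$, and $Q_{M+1}=\int_0^{\alpha_0}(1-e^{-\alpha_s(P_0x+1)})^Me^{-x}\,dx$ is the last displayed sum. Adding $Q_0+\sum_{m=1}^{M}Q_m+Q_{M+1}$ yields \eqref{overalxx}; the hypotheses enter because $M\ge2$ makes $\tilde\eta_0=M!/(M-2)!$ meaningful with $Q_{M-1}$ and $Q_M$ genuinely separate from the $m\le M-2$ sum, while $\epsilon_s\epsilon_0<1$ guarantees $0<\alpha_0<\alpha_1<\alpha_2<\infty$, so that all Group~1-related integrals are over bounded intervals.

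I expect the main obstacle to be not any single hard estimate but the combinatorial bookkeeping: each $Q_m$ generates a triple-nested binomial sum, and one must keep track of a sizeable collection of exponential prefactors ($e^{\pm1/P_s}$, $e^{-\alpha_s}$, and products thereof) and match the resulting composite decay rates to the constants $\tilde\mu_2,\tilde\mu_4,\mu_7,\mu_8,\tilde\mu_{12}$ of the statement. The two steps that genuinely require care, and must be settled before any integration, are (i) the region analysis in $x$ — that $t(x)$ crosses $\alpha_s$ at $\alpha_1$ and that the Group~1 interval closes at $\alpha_2$, which is exactly where $\epsilon_s\epsilon_0<1$ is needed — and (ii) the order-statistics reduction, i.e.\ verifying that each ordered event collapses to the binomial product above, with the $\binom{M}{m}$ factor justified by the disjointness of the subset assignments.
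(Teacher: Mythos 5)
Your proposal is correct and reaches exactly the stated expression, but it routes the central computation differently from the paper. The skeleton is shared: both start from the decomposition into $Q_0,\dots,Q_{M+1}$ in \eqref{overall}, both condition on $|g|^2$, both need the region analysis showing $\xi$ switches form at $\alpha_1$ and the Group-1 interval closes at $\alpha_2$ (where $\epsilon_s\epsilon_0<1$ and $\alpha_0\leq\alpha_1\leq\alpha_2$ enter), and both finish with the elementary integrals packaged as $g_\mu$ and $\phi(p,\mu)$, i.e.\ the paper's \eqref{define of g}. The difference is how the conditional probability over the ordered channel gains is evaluated. The paper writes down the joint pdf of the relevant order statistics (three of them, $|h_m|^2,|h_{m+1}|^2,|h_M|^2$, for $1\leq m\leq M-2$; pairs for $Q_{M-1}$ and $Q_0$), performs nested integrations, and then needs several nontrivial clean-up steps (absorbing $i+1$ and $M-m-l$ into binomial coefficients, inserting the vanishing $l=0$ and $l=M-m$ terms, and invoking $\sum_{l}(-1)^l\binom{n}{l}=0$) to pass from the unwieldy form \eqref{amx} to the compact \eqref{am}, and similarly for $Q_0$. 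You instead use the counting identity ${\rm P}(|h_m|^2<a,|h_{m+1}|^2>t,|h_M|^2<b)=\binom{M}{m}F(a)^m\left(F(b)-F(t)\right)^{M-m}$ for $a\leq t\leq b$, correctly justified by disjointness of the subset assignments, which lands directly on the product form; expanding the two factors binomially then reproduces $\tilde\mu_2,\tilde\mu_4$ and $\phi(p,\tilde\mu_2)$ immediately, and I checked that your treatment of $Q_{M-1}$ (linear second factor, $\mu_7,\mu_8$, prefactor $M=\tilde\eta_0/(M-1)$), $Q_0$, $Q_M$ (split at $\alpha_1$ plus the tail term $(1-e^{-\alpha_s})^Me^{-\alpha_1}$), and $Q_{M+1}$ all match the theorem term by term. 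What your route buys is a shorter and more transparent derivation that bypasses the joint-pdf integrations and the subsequent combinatorial resummation; what the paper's route buys is intermediate expressions (e.g.\ the explicit order-statistic densities) that it reuses verbatim in the asymptotic analysis of Theorem 2, so if you carried your version forward you would re-derive the simplified forms it needs anyway.
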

\begin{proof}
See Appendix A.
\end{proof}
Following the steps in the proof for Theorem \ref{theorem1}, the outage probability for the case $M=1$ can be obtained straightforwardly as shown in the following corollary. 
\begin{corollary}\label{corollary0}
Assume that $\epsilon_s\epsilon_0<1$ and $M=1$. The outage probability achieved by the proposed SGF transmission scheme can be expressed as follows:
 \begin{align}
  {\rm P}_{out} =& e^{\frac{ 1}{P_s}}g_{\frac{\alpha_0^{-1}}{P_s}}\left(\alpha_0, \alpha_2\right) - e^{ -\alpha_s}  g_{\alpha_sP_0} \left(\alpha_0, \alpha_2\right) +\sum^{1}_{i=0}{1\choose i} (-1)^i e^{\frac{ i}{P_s}}g_{\frac{i }{\alpha_0P_s}}\left(\alpha_0,\alpha_1\right)
 +\left(1-e^{-  \alpha_s} \right)    e^{- \alpha_1 } .
\end{align}
\end{corollary}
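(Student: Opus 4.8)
The plan is to take the general decomposition \eqref{overall} and specialize it to $M=1$. This case is genuinely easier than Theorem~\ref{theorem1}: the inner sum $\sum_{m=1}^{M-1}$ that carries the bulk of that proof is empty, the ordering \eqref{channel order} is vacuous, and $|h_1|^2$ is simply a unit-mean exponential random variable independent of $|g|^2$, so no joint order-statistics densities are needed. One is left with $\mathrm{P}_{out}=Q_1+Q_0+Q_2$, where $E_1=\{|h_1|^2<\tau(|g|^2)/P_s\}$, $E_0=\{|h_1|^2>\tau(|g|^2)/P_s\}$, and $Q_1=\mathrm{P}(E_1,|g|^2>\alpha_0,\max\{R_{1,I},R_{1,II}\}<R_s)$, $Q_0=\mathrm{P}(E_0,|g|^2>\alpha_0,R_{1,II}<R_s)$, $Q_2=\mathrm{P}(R_{1,II}<R_s,|g|^2<\alpha_0)$. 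Because $R_{1,I}=\log(1+P_s|h_1|^2)\ge R_{1,II}$, the maximum in $Q_1$ reduces to the single condition $R_{1,I}<R_s$. Each term is then computed by conditioning on $|g|^2=y$, writing the $|h_1|^2$-event as a difference of exponential tails, and integrating against $e^{-y}\,dy$.

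The next step is to resolve the piecewise threshold. By \eqref{fact1}, $\tau(y)=0$ for $y<\alpha_0$ and $\tau(y)/P_s=\tfrac{y}{\alpha_0P_s}-\tfrac1{P_s}$ for $y>\alpha_0$, while $R_{1,I}<R_s\Leftrightarrow|h_1|^2<\alpha_s$ and $R_{1,II}<R_s\Leftrightarrow|h_1|^2<\alpha_s(P_0y+1)$. In $Q_1$ the constraint on $|h_1|^2$ is the minimum of $\tau(y)/P_s$ and $\alpha_s$; since $\tau(y)/P_s$ grows from $0$ at $y=\alpha_0$ and first reaches $\alpha_s$ at $y=\alpha_1=(1+\epsilon_s)\alpha_0$, the integrand splits there, being $\tau(y)/P_s$ on $(\alpha_0,\alpha_1)$ and $\alpha_s$ on $(\alpha_1,\infty)$. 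In $Q_0$, $|h_1|^2$ must lie in $(\tau(y)/P_s,\,\alpha_s(P_0y+1))$, a nonempty interval precisely when $y<\alpha_2=\tfrac{\epsilon_0(1+\epsilon_s)}{(1-\epsilon_0\epsilon_s)P_0}$; here the hypothesis $\epsilon_0\epsilon_s<1$ is exactly what keeps $\alpha_2$ finite and positive and orders the breakpoints as $\alpha_0<\alpha_1\le\alpha_2$. In $Q_2$, $y$ ranges over $(0,\alpha_0)$ with $|h_1|^2<\alpha_s(P_0y+1)$.

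The integration itself is elementary. After expanding the $(1-e^{-t})$-type factors — the $\binom{1}{i}$ sums in the statement being trivial for $M=1$ but written so as to mirror the general-$M$ argument — every integrand is a linear combination of exponentials $e^{-(1+\mu)y}$ over a bounded interval, and $\int_{x_1}^{x_2}e^{-(1+\mu)y}\,dy=g_\mu(x_1,x_2)$ by the definition in Theorem~\ref{theorem1}. This produces the two $g_{\cdot}(\alpha_0,\alpha_2)$ terms from $Q_0$, the $\sum_{i=0}^{1}$ block over $(\alpha_0,\alpha_1)$ together with $(1-e^{-\alpha_s})e^{-\alpha_1}$ (the tail $(\alpha_1,\infty)$ contribution) from $Q_1$, and the contribution from $Q_2$; combining the three and merging repeated fragments via $g_\mu(a,b)+g_\mu(b,c)=g_\mu(a,c)$ gives a closed-form expression of the type displayed in the corollary.

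The only real obstacle is bookkeeping: correctly locating the three integration regimes and their breakpoints $\alpha_0,\alpha_1,\alpha_2$ — in particular recognizing that $\min\{\tau(y)/P_s,\alpha_s\}$ switches at $\alpha_1$ and that the interval defining $Q_0$ degenerates at $\alpha_2$ — and then patiently collecting the many $g_\mu$ fragments into a compact form. The individual integrals are routine, and the role of the hypothesis $\epsilon_0\epsilon_s<1$ is simply to make those integration limits meaningful.
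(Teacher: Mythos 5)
Your decomposition and evaluation follow essentially the same route as the paper, which obtains the corollary by specializing the machinery of Theorem \ref{theorem1} to $M=1$: the same split ${\rm P}_{out}=Q_1+Q_0+Q_2$ from \eqref{overall}, the same reduction of the maximum in $Q_1$ to $R_{1,I}<R_s$, the same breakpoints $\alpha_0<\alpha_1\leq\alpha_2$ (with $\epsilon_0\epsilon_s<1$ entering only through $\alpha_2$), and the same elementary exponential integrals expressed through $g_{\mu}$. All of those intermediate steps are correct, and for $M=1$ no order statistics are needed, exactly as you say.

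The gap is in the final assembly, which you leave as ``combining the three and merging repeated fragments \ldots gives a closed-form expression of the type displayed.'' If you carry the computation out, the $Q_0$ integral over $(\alpha_0,\alpha_2)$ yields exactly the first two displayed terms, and the $Q_1$ integral (split at $\alpha_1$, plus the tail over $(\alpha_1,\infty)$) yields exactly the $\sum_{i=0}^{1}$ block and $\left(1-e^{-\alpha_s}\right)e^{-\alpha_1}$; that is, the four displayed blocks are already exhausted by $Q_0+Q_1$. The third piece, $Q_2=\int_0^{\alpha_0}\bigl(1-e^{-\alpha_s(1+P_0y)}\bigr)e^{-y}\,dy=\left(1-e^{-\alpha_0}\right)-e^{-\alpha_s}\frac{1-e^{-(1+\alpha_sP_0)\alpha_0}}{1+\alpha_sP_0}$, which is the $M=1$ instance of the last summation in Theorem \ref{theorem1}, is strictly positive and has no counterpart in the displayed formula, and the identity $g_{\mu}(a,b)+g_{\mu}(b,c)=g_{\mu}(a,c)$ cannot absorb it: it can only recombine adjacent-interval fragments with the same $\mu$ into an expression that still differs from the display. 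So your argument, executed faithfully, proves ${\rm P}_{out}=Q_0+Q_1+Q_2$ but does not establish the identity as printed; to close the proof you must either exhibit the $Q_2$ block explicitly and note that the corollary's display accounts only for $Q_0+Q_1$ (i.e., the $|g|^2<\alpha_0$ contribution is absent from it), or justify why that term should vanish, which the hedge ``of the type displayed'' does not do.
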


{\it Remark 4:} In this paper, we mainly focus on the case $\epsilon_s\epsilon_0<1$ because  the error floor of ${\rm P}_{out}$ can be avoided in this case, i.e., the scenario with $\epsilon_s\epsilon_0<1$ is   ideal   for the application of the proposed SGF scheme.  $\epsilon_s\epsilon_0<1$ means that $R_s$ needs to be small for a given  $R_0$, which  is a realistic assumption  in practice since SGF is invoked to encourage   spectrum sharing between a grant-based user and  a grant-free user with a small target data rate. 

{\it Remark 5: } We note that for the case $\epsilon_s\epsilon_0\geq 1$, the proposed SGF scheme still works and    offers significant performance gains compared to the two existing SGF schemes, as shown in the simulation section. However, for  $\epsilon_s\epsilon_0\geq 1$,  the outage probability achieved by the proposed SGF scheme    exhibits an error floor, similar to the existing schemes. More detailed discussions will be provided in Section \ref{section IV}.

{\it Remark 6:} The outage probability expression shown in Theorem  \ref{theorem1} is  complicated,  mainly due to the fact that $Q_m$  depends on the choice of $m$. For example, for $1\leq m \leq M-2$, $Q_m$ is a function of the four   channel gains, $h_m$, $h_{m-1}$,  $h_M$, and $g$, whereas  $Q_{M-1}$ is a function of only $h_{M-1}$, $h_M$, and $g$. The fact that the $h_m$, $h_{m-1}$,  and $h_M$ are dependent order statistics   makes the expression even more involved.    However, at high SNR,   insightful approximations can be obtained as shown in the following theorem.

\begin{theorem}\label{theorem2}
Assuming that  $\epsilon_s\epsilon_0<1$, $M\geq 2$ and $P_s=P_0\rightarrow \infty$, the outage probability ${\rm P}_{out}$ can be approximated at high SNR as follows:
\begin{align}
  {\rm P}_{out} 
\approx&       \sum_{m=1}^{M-2}  
\frac{\bar{\eta}_m}{P_s^{M+1}}\epsilon_s^{m}\sum^{M-m}_{i=0} {M-m \choose i}  \left( \epsilon_s+1  \right)^{M-m-i}     
\left(  \epsilon_s- \epsilon_0^{-1}   \right)^{i} \epsilon_0^{i+1}\frac{\tilde{\alpha}_2^{i+1}-(1+\epsilon_s)^{i+1}}{i+1}  \\\nonumber &+ \sum_{m=1}^{M-2}  \frac{\tilde{\eta}_0  \left(1+\epsilon_s\right)\epsilon_s ^{M-1} 
 \epsilon_0}{P_s^{M+1}(M-1)}   \left(  (\tilde{\alpha}_2 -1-\epsilon_s )  +\frac{\epsilon_s}{M}
 \right)    \\\nonumber &+ 
\frac{\bar{\eta}_m}{P_s^{M+1} } \sum^{M-m}_{i=0} {M-m \choose i}  \left( \epsilon_s+\epsilon_0 \epsilon_s  \right)^{M-m-i}     
\left(   \epsilon_s- \epsilon_0^{-1}  \right)^{i}\epsilon_0^{i+1}\frac{\epsilon_s^{m+i+1}}{m+i+1}  
\\\nonumber &+
       \frac{ \tilde{\eta}_0}{P_s^{M+1}M(M-1)}e^{\frac{M}{P_s}   } \sum^{M}_{i=0}{M\choose i}\left(\epsilon_s + 1 \right)^{M-i}    \left( \epsilon_s -\epsilon_0^{-1} \right)^i     \epsilon_0^{i+1} \frac{\tilde{\alpha}_2^{i+1}-1}{i+1}
\\\nonumber &+\frac{1}{P_s^{M+1}(M+1)\epsilon_0^M}   \epsilon_0^{M+1}  \epsilon_s^{M+1}
+   \frac{\epsilon_s  ^M}{P_s^M}  +  \frac{\epsilon_s ^M(1+\epsilon_0)^{M+1}-1}{P_s^{M+1}(M+1)},
\end{align}
where $\tilde{\alpha}_2=\frac{ (\epsilon_s+1)}{(1-\epsilon_0\epsilon_s)  } $.
\end{theorem}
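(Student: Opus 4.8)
The plan is to obtain the high-SNR approximation from the exact expression in Theorem~\ref{theorem1} by a term-by-term Taylor expansion in $P_s^{-1}$ under the scaling $P_s=P_0\to\infty$. The key simplification is that, once $P_s=P_0$, every ``$\mu$''-parameter in Theorem~\ref{theorem1} degenerates to an SNR-independent constant --- $\mu_7=\epsilon_0^{-1}$, $\mu_8=\epsilon_s$, $\tilde\mu_2=l\epsilon_s+(M-m-l)\epsilon_0^{-1}$, $\tilde\mu_{12}=l\epsilon_s+(M-l)\epsilon_0^{-1}$, and the shift $\frac{p}{P_s\alpha_0}=p\epsilon_0^{-1}$ inside $\phi(p,\cdot)$ --- so the whole SNR dependence sits in the four arguments $\alpha_0,\alpha_s,\alpha_1,\alpha_2$, each $\Theta(P_s^{-1})$ with $\alpha_1=(1+\epsilon_s)\alpha_0$ and $\alpha_2=\tilde\alpha_2\alpha_0$, and in scalar prefactors such as $e^{\pm\alpha_s}$, $e^{1/P_s}$, $e^{(M-l)/P_s}$, $\tilde\mu_4$, each equal to $1+\Theta(P_s^{-1})$. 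Writing $g_\mu(x_1,x_2)=\int_{x_1}^{x_2}e^{-(1+\mu)t}\,dt$ and $\phi(p,\mu)=e^{-p\alpha_s}\int_{\alpha_1}^{\alpha_2}e^{-(1+\mu)t}\,dt+e^{p/P_s}\int_{\alpha_0}^{\alpha_1}e^{-(1+\mu)t}e^{-p\epsilon_0^{-1}t}\,dt$ turns every summand into elementary integrals over intervals of length $\Theta(P_s^{-1})$; equivalently, one can bypass Theorem~\ref{theorem1} and go back to the integral representations of the $Q_m$, e.g.\ $Q_{M+1}=\int_0^{\alpha_0}\big(1-e^{-\alpha_s(P_0y+1)}\big)^{M}e^{-y}\,dy$, which is the route I would take for the simpler blocks.

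The crucial structural step is that, in this integral form, each alternating binomial sum has a summand depending on the index only through exponential factors, so it collapses via the binomial theorem. For $Q_m$, $1\le m\le M-2$, the inner $p$-sum gives $(1-e^{-\alpha_s})^{m}$ on the first integral and, using $e^{p/P_s}e^{-p\epsilon_0^{-1}t}=e^{p(1/P_s-\epsilon_0^{-1}t)}$, the factor $\big(1-e^{1/P_s-\epsilon_0^{-1}t}\big)^{m}$ inside the second; the outer $l$-sum collapses the $l$-dependence of $\tilde\mu_4,\tilde\mu_2$ (both exponential in $l$) into $\big(1-e^{-(\epsilon_s+1)/P_s}\big)^{M-m}$ and $\big(1-e^{-[(\epsilon_s+1)/P_s+(\epsilon_s-\epsilon_0^{-1})t]}\big)^{M-m}$. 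Since each collapsed power is $\big(1-e^{-\Theta(P_s^{-1})}\big)^{m}=\Theta(P_s^{-m})$ or the analogous $\Theta(P_s^{-(M-m)})$, together with the $\Theta(P_s^{-1})$ from $dt$ the block is $\Theta(P_s^{-(M+1)})$ \emph{uniformly in $m$} --- which is exactly why nothing of order lower than $P_s^{-M}$ survives from these blocks. Rescaling $t=s/P_s$, replacing the smooth factors by $1$, and keeping the leading power of each collapsed factor reduces $Q_m$ to $\binom{M}{m}P_s^{-(M+1)}$ times $\epsilon_s^{m}\int_{(1+\epsilon_s)\epsilon_0}^{\tilde\alpha_2\epsilon_0}\big[(\epsilon_s+1)+(\epsilon_s-\epsilon_0^{-1})s\big]^{M-m}ds+\int_{\epsilon_0}^{(1+\epsilon_s)\epsilon_0}(\epsilon_0^{-1}s-1)^{m}\big[(\epsilon_s+1)+(\epsilon_s-\epsilon_0^{-1})s\big]^{M-m}ds$; expanding the bracket and integrating the monomials $s^{i}$ reproduces the first and third terms of the claim, with $\bar\eta_m=\binom{M}{m}$.

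The remaining blocks $Q_{M-1},Q_M,Q_0,Q_{M+1}$ have only single sums and are handled the same way: after the binomial collapse, $Q_{M-1},Q_M$ reduce to single integrals of polynomials over $[\epsilon_0,\tilde\alpha_2\epsilon_0]$-type intervals and produce the two terms with the prefactor $\tilde\eta_0=M(M-1)$ (the $e^{M/P_s}$ surviving in the statement is simply an un-expanded $1+\Theta(P_s^{-1})$), while $Q_0$ and $Q_{M+1}$ produce the last three terms. The single genuinely lower-order contribution is the non-alternating piece $\big(1-e^{-\alpha_s}\big)^{M}e^{-\alpha_1}=\big(\alpha_s+O(\alpha_s^2)\big)^{M}\big(1+O(\alpha_1)\big)=\frac{\epsilon_s^{M}}{P_s^{M}}+O(P_s^{-(M+1)})$; having no cancellation it dominates ${\rm P}_{out}$ and gives the claimed diversity gain $M$, and its $P_s^{-(M+1)}$ correction combines with the $\Theta(P_s^{-(M+1)})$ value of $Q_{M+1}$ (from the integral above, using $\big(1-e^{-\alpha_s(P_0y+1)}\big)^{M}=\Theta(P_s^{-M})$ after rescaling $y$) to give the final listed term.

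The step I expect to be the real obstacle is the bookkeeping of expansion orders, especially for $Q_{M-1},Q_M,Q_0$: one must expand each smooth factor to exactly the right order (too few and a surviving term is missed, too many and one drowns in irrelevant $P_s^{-(M+2)}$ debris), and one must identify, for each nested alternating sum, the minimal order at which the finite difference is nonzero. The subtlety is that this is not just the nominal index-degree of the relevant coefficient: since the leading $\Theta(P_s^{-1})$ coefficients of $g_\mu$ --- and hence of $\phi(p,\mu)$ --- are $\mu$-independent (and likewise the leading $\Theta(P_s^{-2})$ coefficient of $\phi(0,\mu)-\phi(1,\mu)$), combinations appearing in Theorem~\ref{theorem1} such as $e^{1/P_s}\phi(i,\mu_7)-e^{-\alpha_s}\phi(i,\mu_8)$ start one or two orders higher than their face value, and it is these hidden cancellations that keep the special blocks at $P_s^{-(M+1)}$ rather than $P_s^{-M}$. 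Once this uniform-$P_s^{-(M+1)}$ count and the special-block cancellations are in place, the rest --- recognising $\alpha_2=\tilde\alpha_2\alpha_0$, evaluating $\int s^{i}\,ds$ over the rescaled intervals, and collecting the six groups of terms --- is routine algebra.
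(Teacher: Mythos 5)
Your proposal follows essentially the same route as the paper's proof: revert each $Q_m$ block to its integral representation, collapse the alternating binomial sums via the binomial theorem, linearize the exponentials ($e^{-x}\approx 1-x$, and the difference form $e^{-x_1}-e^{-x_2}\approx x_2-x_1$, which supplies exactly the cancellation you flag for $Q_{M-1}$), and integrate the resulting polynomials over the $\Theta(P_s^{-1})$ intervals, with the dominant $\epsilon_s^M/P_s^M$ term correctly traced to the non-alternating piece of $Q_M$. The only slight discrepancy is the attribution of the final listed term, which in the paper is simply the leading-order approximation of $Q_{M+1}$ rather than a combination with the $P_s^{-(M+1)}$ correction of $\left(1-e^{-\alpha_s}\right)^M e^{-\alpha_1}$ --- a bookkeeping detail that does not affect the argument.
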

\begin{proof}
See Appendix B.
\end{proof}

{\it Remark 7:} Following the same steps as in the proof of  Theorem \ref{theorem2},  the outage probability for  the case $M=1$ can be approximated as follows:
\begin{align}
  {\rm P}_{out} 
\approx&    \frac{1}{2P_s^{2}}   \epsilon_0  \epsilon_s^{2}
+   \frac{\epsilon_s  }{P_s}   + \frac{1}{P_s^2}\left( 1 +\epsilon_s  \right)\epsilon_0 (\tilde{\alpha}_2-1).
\end{align}

By comparing the terms in Theorem \ref{theorem2}, one can find that there is one term   proportional to $\frac{1}{P_s^{M}}$, and the other ones are   proportional  to  $\frac{1}{P_s^{M+1}}$. Therefore, a further approximation can be straightforwardly obtained  as shown in the following corollary.
\begin{corollary}\label{corollary1}
Assuming that  $\epsilon_s\epsilon_0<1$ and $P_s=P_0\rightarrow \infty$, the outage probability ${\rm P}_{out}$ can be further approximated as follows:
\begin{align}
  {\rm P}_{out}  \approx  \frac{\epsilon_s  ^M}{P_s^M}   . 
\end{align}
A   diversity gain of $M$ is achievable for the proposed SGF transmission scheme. 
 \end{corollary}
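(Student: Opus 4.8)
The plan is to obtain Corollary~\ref{corollary1} directly from the high-SNR expansion established in Theorem~\ref{theorem2}: the corollary is nothing more than the observation that, among the summands listed there, exactly one is of order $P_s^{-M}$ while every other is of order $P_s^{-(M+1)}$, so the former dominates as $P_s=P_0\to\infty$. First I would go through the expression of Theorem~\ref{theorem2} term by term and record the exponent of $P_s$ carried by each. The double sums over $m\in\{1,\dots,M-2\}$, the term containing $e^{M/P_s}$, and the two stand-alone terms $\frac{\epsilon_0^{M+1}\epsilon_s^{M+1}}{P_s^{M+1}(M+1)\epsilon_0^M}$ and $\frac{\epsilon_s^M(1+\epsilon_0)^{M+1}-1}{P_s^{M+1}(M+1)}$ all carry an explicit prefactor $P_s^{-(M+1)}$; the remaining factors in each of them --- binomial coefficients, fixed powers of the constants $\epsilon_s$, $\epsilon_0$, $\tilde{\alpha}_2$, and the factor $e^{M/P_s}$, which converges to $1$ --- are bounded and do not affect the order. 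The only term that is not $O(P_s^{-(M+1)})$ is $\frac{\epsilon_s^M}{P_s^M}$, which is exactly of order $P_s^{-M}$.

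Collecting these observations, I would write ${\rm P}_{out}=\frac{\epsilon_s^M}{P_s^M}+O(P_s^{-(M+1)})=\frac{\epsilon_s^M}{P_s^M}\bigl(1+O(P_s^{-1})\bigr)$, which is precisely the stated approximation ${\rm P}_{out}\approx\frac{\epsilon_s^M}{P_s^M}$. The diversity gain $d=-\lim_{P_s\to\infty}\frac{\log {\rm P}_{out}}{\log P_s}$ then follows by substituting this asymptotic, since $\frac{\log\bigl(\epsilon_s^M P_s^{-M}(1+O(P_s^{-1}))\bigr)}{\log P_s}\to -M$; hence $d=M$. The case $M=1$ is handled identically starting from the single-user approximation in Remark~7, where again the $\frac{\epsilon_s}{P_s}$ term dominates the two $P_s^{-2}$ terms.

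The only point requiring care --- and it is bookkeeping rather than a genuine obstacle --- is to confirm that no cancellation spoils the conclusion: namely that the coefficient $\epsilon_s^M$ of the leading term is strictly positive, which holds because $\epsilon_s=2^{R_s}-1>0$ whenever the grant-free users' target rate $R_s$ is positive, and that none of the $O(P_s^{-(M+1)})$ terms can secretly contribute at order $P_s^{-M}$, which is impossible here because each already displays an explicit factor $P_s^{-(M+1)}$ with all other factors either constant or convergent to a constant. With Theorem~\ref{theorem2} in hand, the corollary is thus immediate.
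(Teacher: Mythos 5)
Your proposal is correct and follows essentially the same route as the paper: the paper likewise obtains Corollary \ref{corollary1} by inspecting the terms of Theorem \ref{theorem2}, noting that exactly one term is proportional to $P_s^{-M}$ while all the others carry a factor $P_s^{-(M+1)}$, and retaining the dominant term $\epsilon_s^M/P_s^M$, from which the diversity gain of $M$ follows. Your added bookkeeping (positivity of $\epsilon_s^M$, boundedness of the remaining factors, and the $M=1$ case via Remark 7) is consistent with, and slightly more explicit than, the paper's one-line justification.
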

 {\it Remark 8:}  Recall that for the two existing SGF schemes,    their outage probabilities suffer from error floors, when $P_s$ and $P_0$ go to infinity.  Corollary \ref{corollary1} demonstrates that not only can the proposed SGF transmission scheme  avoid these error floors, but also can  it   ensure that the achievable diversity gain is proportional to the number of participating grant-free users, i.e., the more grant-free users there are, the better the   outage performance.

{\it Remark 9:} The main reason why the proposed SGF scheme   avoids an error floor can be  explained as follows.  By using  \eqref{out overall}, an upper bound on the outage probability achieved by the proposed SGF scheme can be obtained  as follows:
\begin{align}  
{\rm P}_{out} = & \sum^{M}_{m=1}{\rm P}\left(E_m,     \max \left\{R_{m,I},R_{M,II} \right\}<R_s\right)+ {\rm P}\left(E_0,  R_{M,II} <R_s\right)
\\\nonumber
 \leq  & \sum^{M}_{m=1}{\rm P}\left( R_{m,I} <R_s\right)+ {\rm P}\left(E_0,  R_{M,II} <R_s\right)
 \\\nonumber
 =  & \sum^{M}_{m=1}{\rm P}\left( \log\left(1+P_s|h_m|^2\right)<R_s\right)+ {\rm P}\left(E_0,  \log\left(1+\frac{P_s|h_M|^2}{P_0|g|^2+1}\right) <R_s\right),
\end{align}
where the last step follows from the definitions of $R_{m,I}$ and $R_{M,II}$. 

By using the fact that the users are ordered as in \eqref{channel order}, ${\rm P}_{out}$ can be further upper bounded as follows: 
 \begin{align}  \label{lax}
{\rm P}_{out} \leq  & M{\rm P}\left( \log\left(1+P_s|h_1|^2\right)<R_s\right)+ \underset{Q_u}{\underbrace{{\rm P}\left(E_0,  \log\left(1+\frac{P_s|h_M|^2}{P_0|g|^2+1}\right) <R_s\right)}}.
\end{align}

Recall that the outage probability for SGF Scheme II is ${\rm P}_{out}^{II}\triangleq {\rm P}\left(\log\left(1+\frac{P_s|h_M|^2}{P_0|g|^2+1}\right)<R_s\right)$, where an error floor exists since its signal-to-interference-plus-noise ratio (SINR) becomes a constant when $P_s$ and $P_0$ go to infinity. The probability   $Q_u$ is quite similar to ${\rm P}_{out}^{II}$, but $Q_u$  does not exhibit  an  error floor, as explained in the following. By using the definition of $E_0$,    $Q_u$ can be rewritten 
as follows:
\begin{align}
Q_u=&{\rm P}\left(|h_{1}|^2>\frac{\tau(|g|^2)}{P_s} ,    |h_M|^2  <\alpha_s(P_0|g|^2+1)\right)  .
\end{align}
Since $|h_{1}|^2\leq |h_{M}|^2$, the lower bound on $|h_{1}|^2$,  $\frac{\tau(|g|^2)}{P_s}$, needs to be smaller than the upper bound on $|h_M|^2$, $\alpha_s(P_0|g|^2+1)$, which introduces  an additional constraint $|g|^2< \alpha_2$, if $\epsilon_s\epsilon_0< 1$, as shown in \eqref{range} - \eqref{range3}.  This additional  constraint $|g|^2< \alpha_2$   effectively removes the error floor since
\begin{align}
Q_u =&{\rm P}\left(|h_{1}|^2>\frac{\tau(|g|^2)}{P_s} ,  \log\left(1+\frac{P_s|h_M|^2}{P_0|g|^2+1}\right)<R_s,|g|^2<\alpha_2 \right) \\\nonumber \leq& {\rm P}\left(|g|^2<\alpha_2 \right) = 1-e^{-\alpha_2}\rightarrow 0,
\end{align}
for $P_0\rightarrow \infty$. On the other hand, it is straightforward to show that the first term in \eqref{lax}, ${\rm P}\left( \log\left(1+P_s|h_1|^2\right)<R_s\right)$, also goes to zero at high SNR. Therefore, ${\rm P}_{out}$ does not have an error floor.

\section{Simulation Results} \label{section IV}
In this section, the performance of the proposed  SGF transmission scheme is studied via  computer simulations, where the accuracy of the developed analytical results is also evaluated. To facilitate performance evaluation, the two existing SGF schemes proposed in \cite{8662677} are used as   benchmark schemes. We note that the proposed SGF scheme allows all the users to participate in contention. Therefore, for a fair comparison, we choose $\tau_{I}=\infty$ and $\tau_{II}=0$ for the two benchmarking schemes, which allow all   grant-free users to participate in contention and hence yield the best performance for the two schemes.

In Fig. \ref{fig 1}, the outage performance achieved by the proposed SGF transmission scheme is compared to those of the two existing schemes for different choices of $P_s$ and $P_0$. In particular, in Fig. \ref{fig1a}, we assume $P_s=\frac{P_0}{10}$, which is equivalent to the case where the grant-free users have weaker channel conditions than the grant-based user, if all the users use the same transmit power. Recall that SGF Scheme I  first decodes the grant-based user's signal by treating the grant-free user's signal as interference. Therefore, the situation with $P_s=\frac{P_0}{10}$ is ideal for the application of SGF Scheme I, and Fig. \ref{fig1a} confirms this conclusion since SGF Scheme I outperforms SGF Scheme II. We note that for SGF Scheme I, the outage probability for $M=1$ can be better than that for $M=5$, since a larger $M$ can reduce  ${\rm P}\left(\log\left(1+\frac{P_0|g|^2}{P_s|h_1|^2+1}\right)<R_0\right)$  but may increase ${\rm P}\left(\log\left(1+P_s|h_1|^2\right)<R_s\right)$.  In Fig. \ref{fig1b}, we focus on the situation, where  $P_s\rightarrow \infty$ and $P_0$ is kept  constant. This is equivalent to the case where the grant-free users have stronger channel conditions than the grant-based user, if all the users use the same transmit power. Therefore,  this situation is ideal for the application of SGF Scheme II, and Fig. \ref{fig1b} shows that  SGF Scheme II indeed outperforms SGF Scheme I. For both considered scenarios,   the proposed  SGF scheme   outperforms the two existing schemes, and can also effectively avoid   error floors in both considered  scenarios, as shown in the two figures. 
 
\begin{figure}[t]
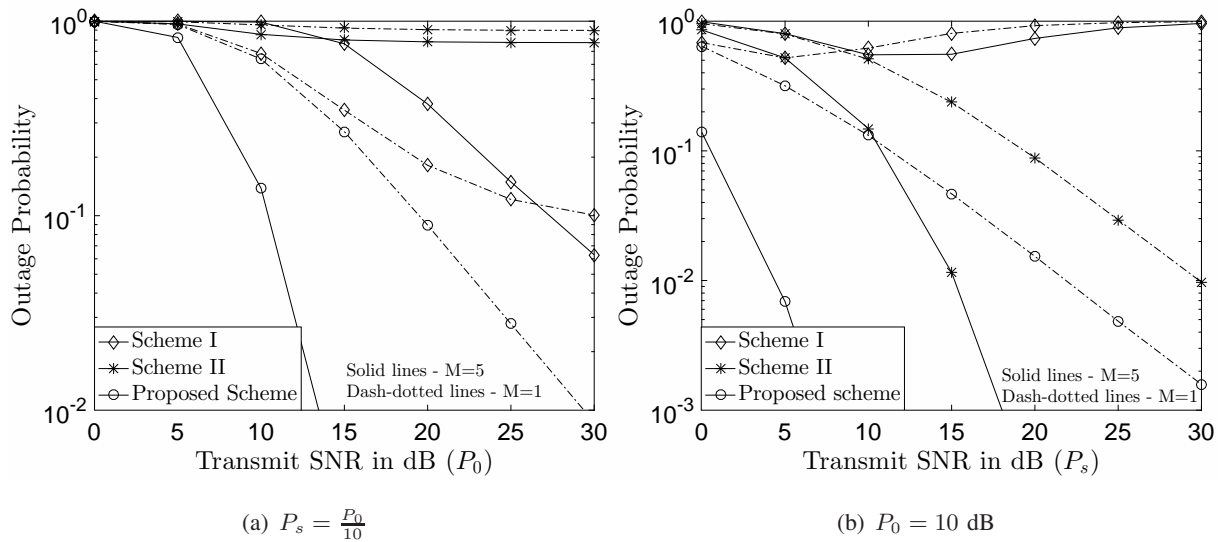
 \vspace{-2em}
\begin{center}\subfigure[$P_s=\frac{P_0}{10}$ ]{\label{fig1a}\includegraphics[width=0.48\textwidth]{figcom2.eps}}
\subfigure[$P_0=10$ dB]{\label{fig1b}\includegraphics[width=0.48\textwidth]{figcom1.eps}} \vspace{-1em}
\end{center} 
 \caption{Comparison of the three SGF transmission schemes.   $R_0=1$ bit per channel use (BPCU), and $R_s=0.9$ BPCU.      }\label{fig 1}\vspace{-1.5em}
\end{figure}

In Fig. \ref{fig 2}, we  examine the accuracy of the developed analytical results for the outage probability. In Fig. \ref{fig2a}, the exact expressions for the outage probabilities shown in Theorem \ref{theorem1} and Corollary \ref{corollary0} are used, and the figure shows that the curves for the analytical results perfectly match the curves obtained by simulations, which demonstrates the accuracy of the result provided  in Theorem \ref{theorem1}.  In Fig. \ref{fig2b}, the accuracy of the approximations developed in Theorem \ref{theorem2} and Corollary \ref{corollary1} is studied. As can be observed from the figure, both approximations are accurate at high SNR. We note that the approximation   in Corollary \ref{corollary1} becomes less accurate as  $M$ increases. This is due to the fact that the approximation in Corollary \ref{corollary1} disregards   the terms, $Q_m$, $0\leq m \leq M-1$, and $Q_{M+1}$, and considers  $Q_M$ only. When $M$ is small, such an approximation is accurate. But the gap between the approximation and the actual  probability becomes noticeable when $M$ becomes large. The fact that the curves for the approximation in Corollary \ref{corollary1}  are   below the other curves is also due to the same reason. 

 In Fig. \ref{fig3}, the impact of different choices for the target data rate and the transmit power   on the outage performance is studied. The figure shows that reducing the grant-free user's target rate   can affect the outage probability   more significantly than reducing the grant-based user's target rate. In addition, the figure shows that, for a fixed $P_0$, increasing $P_s$ can improve  the  grant-free user's outage performance, i.e., a grant-free user can improve its performance by increasing its own transmit power. This is not true for SGF Scheme I since increasing  $P_s$ deteriorates the probability ${\rm P}\left(\log\left(1+\frac{P_0|g|^2}{P_s|h_1|^2+1}\right)<R_0\right)$. 

\begin{figure}[t]
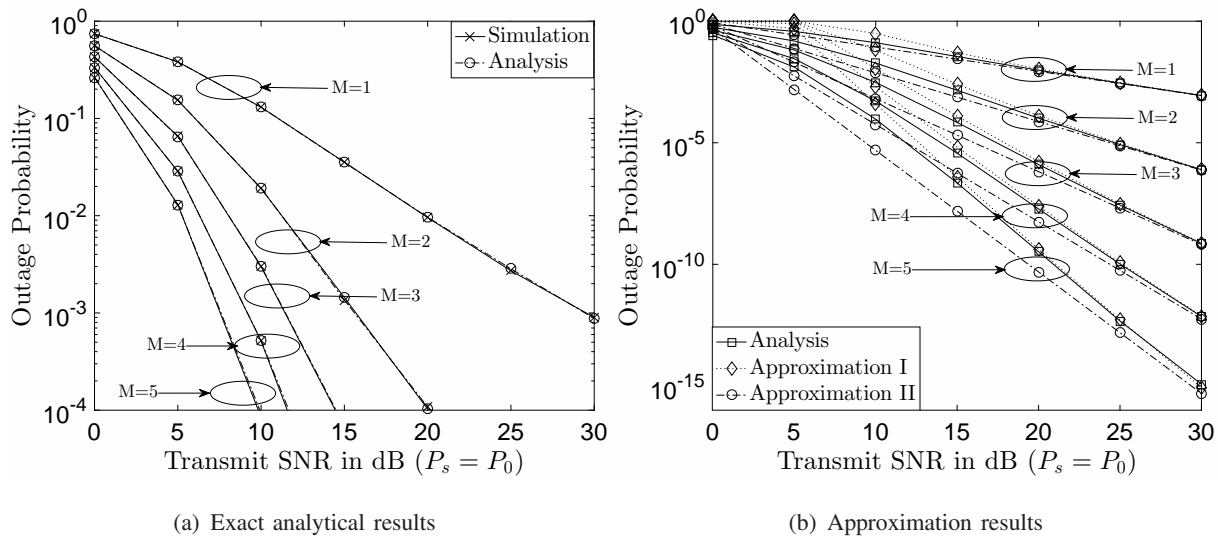
 \vspace{-2em}
\begin{center}\subfigure[Exact analytical results  ]{\label{fig2a}\includegraphics[width=0.48\textwidth]{figana.eps}}
\subfigure[Approximation results]{\label{fig2b}\includegraphics[width=0.48\textwidth]{figana0.eps}} \vspace{-1em}
\end{center} 
 \caption{Accuracy of the developed analytical results. $P_0=P_s$, $R_0=1$ bit per channel use (BPCU), and $R_s=0.9$ BPCU.  The curves for Analysis are based on Theorem \ref{theorem1} and Corollary \ref{corollary0}, the curves   for Approximation I are based on Theorem \ref{theorem2}, and the curves  for Approximation II are based on Corollary \ref{corollary1}.   }\label{fig 2}\vspace{-0em}
\end{figure}

\begin{figure}[!t]\centering \vspace{-1em}
    \epsfig{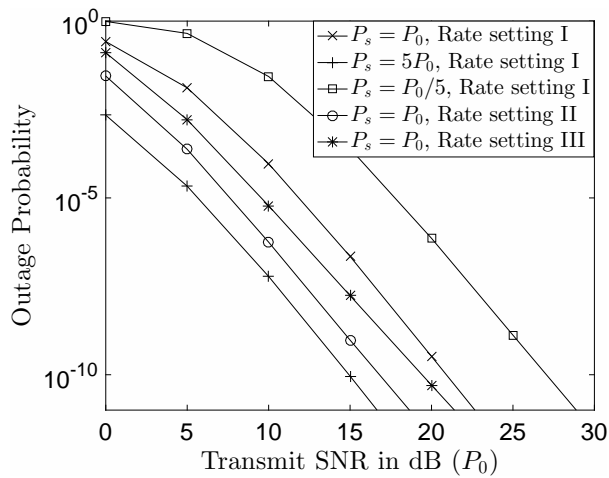}\vspace{-0.5em}
\caption{ Performance of the proposed SGF transmission scheme for different choices for the transmit power and the target rates. For Rate Setting I, $R_0=1$ BPCU and $R_s=0.9$ BPCU. For Rate Setting II, $R_0=1$ BPCU and $R_s=0.5$ BPCU. For Rate Setting III,  $R_0=0.5$ BPCU and $R_s=0.9$ BPCU.  $M=5$.    \vspace{-1em} }\label{fig3}\vspace{-0.5em}
\end{figure}

\begin{figure}[!t]\centering \vspace{-0.1em}
    \epsfig{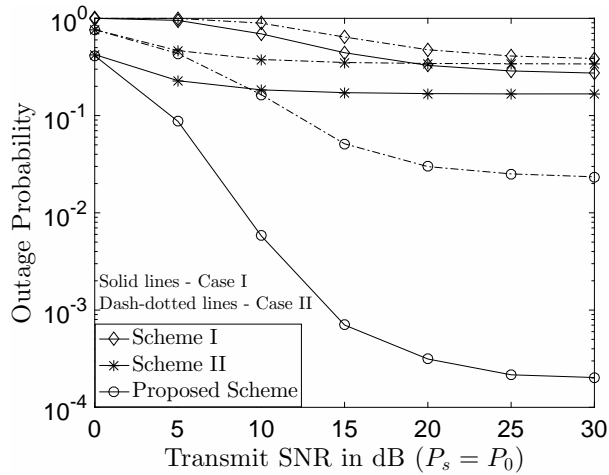}\vspace{-0.5em}
\caption{ Performance of the proposed SGF transmission scheme when $\epsilon_s\epsilon_0\geq 1$.   For Case I, $R_0=1.5$ BPCU, $R_s=1$ BPCU. For Case II, $R_0=2$ BPCU, $R_s=1.5$ BPCU. $M=5$.      \vspace{-1em} }\label{fig4}\vspace{-0.5em}
\end{figure}

In Fig. \ref{fig4}, the performance of the proposed SGF transmission scheme is evaluated under the condition that $\epsilon_s\epsilon_0\geq 1$. As discussed in Remarks 3 and 7, the condition $\epsilon_s\epsilon_0< 1$ is important to avoid   error floors. If this condition does not hold, error floors appear, as shown in Fig. \ref{fig4}. However, we note that the outage performance achieved by the proposed SGF transmission scheme is still significantly better than those of the two existing schemes. For example, for the case with $R_0=1.5$ bits per channel use (BPCU), and $R_s=1$ BPCU, the   proposed  scheme can achieve  an outage probability of $1\times 10^{-4}$, whereas the outage probabilities achieved by the two existing schemes exceed  $1\times 10^{-1}$.

\begin{figure}[!t]\centering \vspace{-0em}
    \epsfig{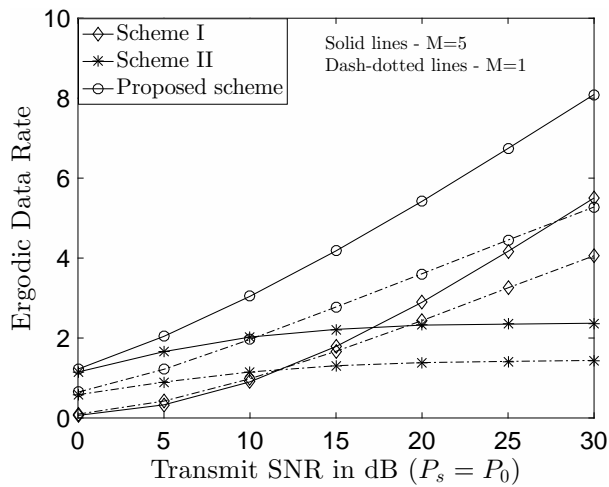}\vspace{-0.5em}
\caption{ Ergodic data rate comparison of the three considered SGF transmission schemes. $R_0=1$ BPCU.     \vspace{-1em} }\label{fig5}\vspace{-0.5em}
\end{figure}

In Fig. \ref{fig5}, the ergodic data rate is used to evaluate the performance of the considered SGF schemes. As can be observed from the figure, the proposed SGF scheme outperforms both existing schemes, particularly at high SNR, which is consistent with   the figures showing  the outage probability. In addition, Fig. \ref{fig5} shows that the slope of the curves for the proposed SGF scheme is larger than those of the two existing schemes, which demonstrates  that the proposed scheme can effectively exploit    multi-user diversity.  An interesting observation is that, for high  transmit powers, the curves for SGF Scheme II become flat, whereas the curves for the other two schemes do not. This is due to the fact that the    data rate achieved by Scheme II is $  \log\left(1+\frac{P_s|h_M|^2}{P_0|g|^2+1}\right) $, which becomes a constant when both $P_s$ and $P_0$ approach infinity. On the other hand, once the grant-based user's target data rate can be realized, the achievable data rates for SGF Scheme I and the proposed SGF scheme are of the form   $\log(1+P_s|h_m|^2)$, which means that their ergodic data rates are not bounded when $P_s$ goes to infinity, as confirmed by the figure. The performance gain of the proposed scheme over Scheme I is due to the fact that, when the grant-based user's signal cannot be decoded correctly in the first stage of SIC, the data rate of Scheme I becomes zero, but the proposed scheme can still offer a non-zero data rate by changing  the SIC order.

 Fig. \ref{fig 6} shows  the grant-free users' admission probabilities, i.e.,  which  grant-free user is admitted to the resource block by the proposed SGF scheme, for different choices of $R_0$ and the users' transmit powers. We first note that the admission probability for grant-free user $m$   is given by
\begin{align} 
{\rm P}_m = &  {\rm P}\left(E_m,     R_{m,I}>R_{M,II}  \right)  ,
\end{align}
for $1\leq m \leq M-1$, and 
\begin{align}\nonumber
{\rm P}_M = &\sum^{M-1}_{n=1}  {\rm P}\left(E_m,     R_{m,I}<R_{M,II}  \right) + {\rm P}\left(E_M   \right)+ {\rm P}\left(E_0 \right).
\end{align}
 Fig. \ref{fig 6} shows that at low SNR, grant-free user $M$, the user with the strongest channel gain, is preferred over the other users, as explained as follows.  At low SNR, the threshold $\tau(|g|^2) \triangleq \max\left\{0, \frac{P_0|g|^2}{2^{R_0}-1}-1\right\}
$ is very likely to be zero, which means that Group 2 is empty, i.e., $E_0$ happens. As a result, grant-free user $M$ is granted access. In addition,  Fig. \ref{fig 6} shows that at high SNR, the users' admission probabilities become constant, and increasing $R_0$ increases the admission probabilities of the grant-free users whose channel gains are weak, which can be explained as follows.  By assuming $P_s=P_0\rightarrow \infty$, ${\rm P}_m$, $1\leq m \leq M-1$, can be approximated as follows:\footnote{Obtaining an exact expression for ${\rm P}_m$ is not a trivial task since ${\rm P}_m$ is a function of four random variables, $|g|^2$, $|h_m|^2$, $|h_{m+1}|^2$ and $|h_M|^2$. The fact that $|h_m|^2$, $|h_{m+1}|^2$ and $|h_M|^2$ are not independent makes it   more difficult  to analyze ${\rm P}_m$, which is an important direction for future research. }
\begin{align} \label{pmxx}
{\rm P}_m = &  {\rm P}\left(E_m,     R_{m,I}>R_{M,II}  \right) \rightarrow  {\rm P}\left(E_m \right)\\\nonumber =& {\rm P}\left(
 |h_{m}|^2<\frac{\tau(|g|^2)}{P_s},    |h_{m+1}|^2> \frac{\tau(|g|^2)}{P_s}  \right)
  \rightarrow  {\rm P}\left(
 |h_{m}|^2< \frac{|g|^2}{2^{R_0}-1} ,    |h_{m+1}|^2>  \frac{|g|^2}{2^{R_0}-1}   \right),
\end{align}
which is indeed a constant and not a function of $P_s$ or $P_0$. Since  ${\rm P}_m$, $1\leq m \leq M-1$, are constant at high SNR and $\sum^{M}_{m=1}{\rm P}_m =1$, ${\rm P}_M $ must also be constant at high SNR, as confirmed by the figure. By increasing $R_0$, $ \frac{|g|^2}{2^{R_0}-1} $ is reduced, and \eqref{pmxx} indicates that  ${\rm P}_m$ is increased for  small $m$, i.e., the weak users' admission probabilities are increased, as shown in Fig. \ref{fig 6}.

\begin{figure}[t]
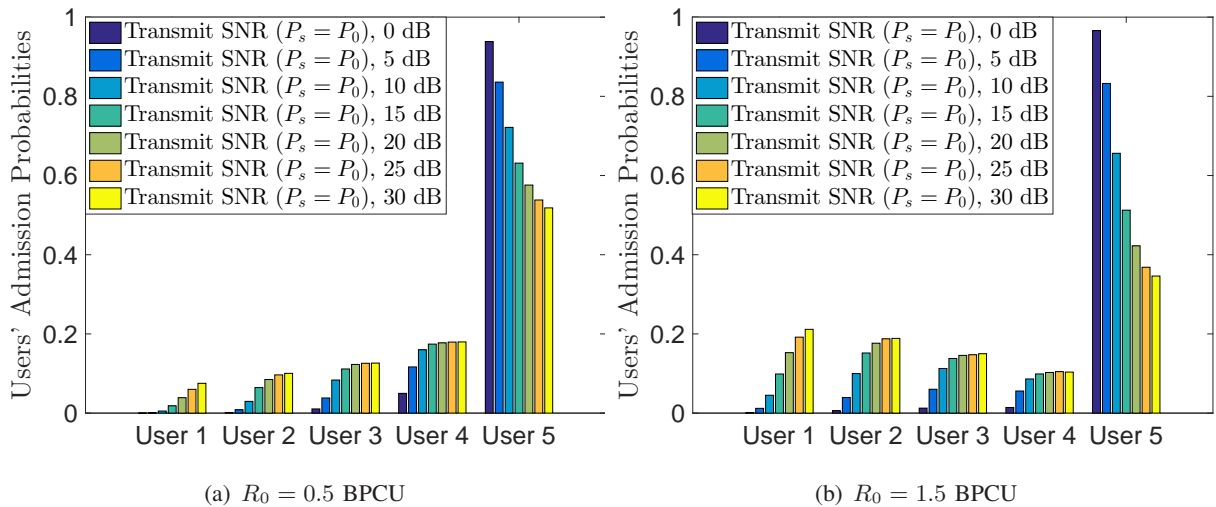
 \vspace{-0.5em}
\begin{center}\subfigure[$R_0=0.5$ BPCU ]{\label{fig6a}\includegraphics[width=0.48\textwidth]{ad05.eps}}
\subfigure[$R_0=1.5$ BPCU]{\label{fig6b}\includegraphics[width=0.48\textwidth]{ad15.eps}} \vspace{-1em}
\end{center} 
 \caption{Users' admission probabilities for the proposed SGF scheme. $M=5$.  }\label{fig 6}\vspace{-0em}
\end{figure}

\section{Conclusions} \label{section V}
In this paper, we have proposed a new NOMA assisted SGF transmission scheme. Compared to the two existing SGF schemes, this new scheme can ensure that admitting a grant-free user is completely transparent to the grant-based user, i.e., the grant-based user communicates with its base station as if it solely occupied the channel. In addition, the proposed SGF scheme significantly improves  the  reliability  of the grant-free users' transmissions  compared to the existing SGF schemes. To facilitate the performance evaluation of the proposed SGF scheme, an exact expression for the outage probability   was derived, where an asymptotic analysis  was also   carried out to show that the full  multi-user diversity gain of $M$ is achievable. Computer simulation results were provided to demonstrate the performance of the proposed SGF scheme and  to verify the accuracy of the developed analytical results.  

In this paper, Rayleigh fading is assumed for the users' channel gains. An important direction for future research is to carry out a stochastic geometry analysis by taking the users' path  losses  into consideration.  In addition, we assumed  that the admitted grant-free user uses only one time slot for transmission. In practice, the grant-free user may perform   retransmission and occupy the channel for a few consecutive time slots. An interesting  direction for  future research is to develop  hybrid automatic repeat request (HARQ) schemes for SGF transmission. 
%

\appendices

\section{Proof for Theorem \ref{theorem1}}
The evaluation of probability $Q_m$     in \eqref{overall} depends on the value of $m$,  as shown in the following subsections.  

\subsection{Evaluation of  $Q_m$, $1\leq m\leq M-2$}
In this case,   probability   $Q_m$ involves  three order statistics, $h_m$, $h_{m+1}$,  and $h_M$, and  can be expressed as follows:
\begin{align} \nonumber 
  {Q}_{m} =&{\rm P}\left(E_m,|g|^2>\frac{\epsilon_0}{P_0},  \log\left(1+P_s|h_m|^2\right)<R_s,  \log\left(1+\frac{P_s|h_M|^2}{P_0|g|^2+1}\right) <R_s \right)
 \\  =&\underset{|g|^2>\frac{\epsilon_0}{P_0}} {\mathcal{E}}\left\{{\rm P}\left( |h_m|^2 <\xi, |h_{m+1}|^2 >\frac{P_0\epsilon_0^{-1}|g|^2-1}{P_s} ,  |h_M|^2<\frac{\epsilon_s(1+P_0|g|^2) }{P_s} \right)\right\}, \label{qmz1}
\end{align}
where  $\mathcal{E}\{\cdot\}$ denotes the expectation operation, and $\xi=\min\left\{\frac{\epsilon_s}{P_s}, \frac{ P_0\epsilon_0^{-1}|g|^2-1}{P_s}\right\}$.

For the case $1\leq m\leq M-2$, $h_{m+1}$  and $h_M$ are different. As a result, there is   a hidden constraint  in \eqref{qmz1} that the lower bound on $h_{m+1}$ should be smaller than the upper bound on $h_M$, i.e., $\frac{\epsilon_s(1+P_0|g|^2) }{P_s}>\frac{P_0\epsilon_0^{-1}|g|^2-1}{P_s}$.  We first note that whether $\frac{\epsilon_s(1+P_0|g|^2) }{P_s}>\frac{P_0\epsilon_0^{-1}|g|^2-1}{P_s}$ holds depends on the value of $g$ as shown in the following: 
\begin{align}\label{range}
 &\epsilon_s (1+P_0|g|^2)- \left(P_0\epsilon_0^{-1}|g|^2-1\right)
 \\\nonumber =&(\epsilon_s-\epsilon_0^{-1})  P_0|g|^2 +\epsilon_s+1
 \left\{ \begin{array}{ll}  <0, & {\rm if} \text{ } |g|^2 >\frac{\epsilon_0(\epsilon_s+1)}{(1-\epsilon_0\epsilon_s)  P_0}\\ >0, & \text{otherwise}
 \end{array}\right.,
\end{align}
where the assumption   that $\epsilon_s\epsilon_0<1$ was used.
Furthermore, we note that the following inequality always holds
\begin{align}\label{range2}
\frac{\epsilon_0(\epsilon_s+1)}{(1-\epsilon_0\epsilon_s)  P_0}>\frac{\epsilon_0}{P_0},
\end{align}
 since 
$(\epsilon_s+1 )- (1-\epsilon_s\epsilon_0 ) = \epsilon_s+\epsilon_s\epsilon_0\geq 0$.

Therefore, denoting the probability inside   the expectation in \eqref{qmz1} by $S_m$,     $Q_m$ can be expressed as follows:  
\begin{align} \label{range3}
  {Q}_{m} =&\underset{ \alpha_2>|g|^2>\alpha_0} {\mathcal{E}}\left\{ S_m\right\}+\underset{|g|^2>\alpha_1} {\mathcal{E}}\left\{ S_m\right\}\\\nonumber
 =&\underset{\alpha_2>|g|^2>\alpha_0} {\mathcal{E}}\left\{ S_m\right\},
\end{align}
where the last step follows by using \eqref{range}.

For the case $1\leq m \leq M-2$, $S_m$ is a function of three order statistics,  $|h_m|^2$, $|h_{m+1}|^2$, and $|h_{M}|^2$, whose joint probability density function (pdf) is given by \cite{David03}
\begin{align}\label{3pdf}
&f_{|h_{m}|^2,|h_{m+1}|^2, |h_{M}|^2}(x,y,z) \\\nonumber=& \eta_m e^{-x}\left(1-e^{-x}\right)^{m-1}   e^{-y}\left(e^{-y}-e^{-z}\right)^{M-m-2} e^{-z}\\\nonumber
 =& \eta_m \sum^{M-m-2}_{i=0}{M-m-2 \choose i}(-1)^ie^{-x}\left(1-e^{-x}\right)^{m-1}  e^{-y} e^{-(M-m-2-i)y}e^{-iz}\  e^{-z},
\end{align}
where $x\leq y\leq z$ and $\eta_m = \frac{M!}{(m-1)!(M-m-2)!}$.

For a fixed $g$ and by using the joint pdf shown in \eqref{3pdf},   $S_m$  can be expressed as follows:
\begin{align}\nonumber
S_m =&  \eta_m \sum^{M-m-2}_{i=0}{M-m-2 \choose i}(-1)^i\int_0^{\xi}e^{-x}\left(1-e^{-x}\right)^{m-1}\\\nonumber &\times  \int_{\frac{\alpha_0^{-1}|g|^2-1}{P_s} }^{\alpha_s(1+P_0|g|^2)}  e^{-(M-m-1-i)y}    \int^{\alpha_s(1+P_0|g|^2) }_{y}
    e^{-(i+1)z} dz dydx. 
\end{align}

With some algebraic manipulations, $S_m$ can be calculated as follows:
\begin{align}  
S_m \nonumber
  =
   \eta_m \sum^{M-m-2}_{i=0}{M-m-2 \choose i}  \sum^{m}_{p=0}{m \choose p}\frac{(-1)^{i+p}e^{-p\xi}}{m(i+1)}  \\  \times   \nonumber
  \left(\frac{\mu_3e^{-\mu_1 |g|^2}-\mu_5e^{-\mu_6 |g|^2}
  }{M-m}  - \frac{\mu_4  e^{-\mu_2|g|^2}-\mu_5  e^{-\mu_{6}|g|^2}}{M-m-1-i}\right),
\end{align} 
where $\mu_1 = \frac{(M-m)\alpha_0^{-1}}{P_s}$, $\mu_2 = \left((i+1)\alpha_s+(M-m-1-i)\frac{\epsilon_0^{-1}}{P_s} \right)P_0$, $\mu_3=e^{\frac{M-m}{P_s} }$, $\mu_4 = e^{-(i+1)\alpha_s+(M-m-1-i)\frac{ 1}{P_s} }
$, $\mu_5=e^{-(M-m)\alpha_s  }$, and $\mu_6=(M-m)\alpha_sP_0$.  
 
 Recall that $Q_m$ can be obtained by finding the expectation of $S_m$ for $\alpha_2>|g|^2>\alpha_0$, i.e., $ {Q}_{m} =\underset{\alpha_2>|g|^2>\alpha_0} {\mathcal{E}}\left\{ S_m\right\}$.  
We note that   $S_m$ is a function   $|g|^2$ via $\xi$. The complication is that   $\xi$ can have two possible forms depending on the value  of $|g|^2$ as shown in the following: 
 \begin{align}
 \xi = &\left\{ \begin{array}{ll} \alpha_s, &\text{if } \epsilon_s<\alpha_0^{-1}|g|^2-1\\  \frac{\alpha_0^{-1}|g|^2-1}{P_s}, &\text{otherwise} \end{array}\right.
 \\\nonumber 
 = &\left\{ \begin{array}{ll} \alpha_s, &\text{if } |g|^2>  \alpha_1 \\  \frac{\alpha_0^{-1}|g|^2-1}{P_s}, &\text{otherwise} \end{array}\right..
 \end{align}

 It is important to note that $\alpha_0\leq \alpha_1\leq \alpha_2$     always holds since
\begin{align}
\frac{\epsilon_0}{P_0} \leq \frac{\epsilon_0(1+\epsilon_s)}{P_0} \leq \frac{\epsilon_0(1+\epsilon_s)}{P_0(1-\epsilon_0\epsilon_s)}. 
\end{align}

Therefore, a key step for evaluating $\underset{\alpha_2>|g|^2>\alpha_0} {\mathcal{E}}\left\{ S_m\right\}$ is to calculate the following general expectation:
\begin{align}\nonumber
&\underset{\alpha_2>|g|^2>\alpha_0} {\mathcal{E}}\left\{ e^{-p\xi}e^{-\mu|g|^2}\right\}
\\\nonumber =&e^{-p\alpha_s}\int^{\alpha_2}_{\alpha_1}e^{-(1+\mu) x}dx +\int^{\alpha_1}_{\alpha_0}e^{-p\frac{\alpha_0^{-1} x-1}{P_s}}e^{-(1+\mu) x}dx
\\  =&e^{-p\alpha_s}g_{\mu}(\alpha_1,\alpha_2)   +e^{\frac{ p}{P_s}}  g_{\mu+\frac{ p}{P_s\alpha_0}}(\alpha_0,\alpha_1)\label{define of g}. 
\end{align}

 By using the result shown in \eqref{define of g}, $ {Q}_{m}  $ can be calculated as follows:
 \begin{align}\label{amx}
 {Q}_{m} =   
   \eta_m \sum^{M-m-2}_{i=0}{M-m-2 \choose i}\frac{(-1)^i}{m(i+1)}  \sum^{m}_{p=0}{m \choose p}(-1)^p  \\\nonumber \times   
  \left(\frac{\mu_3  \phi(p,\mu_1) -\mu_5  \phi(p,\mu_6)   }{M-m}  - \frac{\mu_4  \phi(p,\mu_2)-\mu_5  \phi(p,\mu_6) }{M-m-1-i}\right) .
\end{align}
  
The form in  \eqref{amx} is   quite involved and cannot be directly used to obtain a high SNR approximation later. In the following, we will show that the expression of $Q_m$ can be simplified. In particular, $Q_m$ can be first rewritten as follows: 
   \begin{align} 
 {Q}_{m} =&   
  \frac{ \eta_m}{m(M-m-1)} \sum^{M-m-2}_{i=0}{M-m-1 \choose i+1}(-1)^i  \sum^{m}_{p=0}{m \choose p}(-1)^p  \\\nonumber &\times   
  \left(\frac{\mu_3  \phi(p,\mu_1) -\mu_5  \phi(p,\mu_6)   }{M-m}  - \frac{\mu_4  \phi(p,\mu_2)-\mu_5  \phi(p,\mu_6) }{M-m-1-i}\right)  ,
\end{align}
which is obtained by absorbing $i+1$ into the binomial coefficients ${M-m-1 \choose i+1}$. By letting $l=i+1$, $Q_m$ can be further rewritten as follows:
   \begin{align} 
 {Q}_{m} =&      
  \frac{ -\eta_m}{m(M-m-1)} \sum^{M-m-1}_{l=0}{M-m-1 \choose l}(-1)^l  \sum^{m}_{p=0}{m \choose p}(-1)^p  \\\nonumber &\times   
  \left(\frac{\mu_3  \phi(p,\mu_1) -\mu_5  \phi(p,\mu_6)   }{M-m}  - \frac{\tilde{\mu}_4  \phi(p,\tilde{\mu}_2)-\mu_5  \phi(p,\mu_6) }{M-m-l}\right) .
\end{align} 
We note that the term $l=0$ can be added since $\frac{\mu_3  \phi(p,\mu_1) -\mu_5  \phi(p,\mu_6)   }{M-m}  - \frac{\tilde{\mu}_4  \phi(p,\tilde{\mu}_2)-\mu_5  \phi(p,\mu_6) }{M-m-l}=0$ for $l=0$.

We further note the fact that $\mu_1$, $\mu_3$, $\mu_5$ and $\mu_6$ are not functions of $l$. Therefore,   by using the fact that $ \sum^{n}_{l=0}(-1)^l  {n \choose l}=0$, some terms in $Q_m$ can be eliminated. In particular,   $Q_m$ can be simplified as follows:
   \begin{align} \nonumber
 {Q}_{m} =&       
 \frac{ \eta_m}{m(M-m-1)} \sum^{M-m-1}_{l=0}{M-m-1 \choose l}(-1)^l    \sum^{m}_{p=0}{m \choose p}(-1)^p    
   \frac{\tilde{\mu}_4  \phi(p,\tilde{\mu}_2)-\mu_5  \phi(p,\mu_6) }{M-m-l}\\\nonumber
 \overset{(a)}{  =}&       
\bar{\eta}_m  \sum^{M-m}_{l=0}{M-m \choose l}(-1)^l  \sum^{m}_{p=0}{m \choose p}(-1)^p      
 \left(\tilde{\mu}_4  \phi(p,\tilde{\mu}_2)-\mu_5  \phi(p,\mu_6) \right) 
 \\ \label{am}
    \overset{(b)}{  =}&       
\bar{\eta}_m  \sum^{M-m}_{l=0}{M-m \choose l}(-1)^l  \sum^{m}_{p=0}{m \choose p}(-1)^p   \tilde{\mu}_4  \phi(p,\tilde{\mu}_2)  ,
\end{align}
where step (a) follows by absorbing  $M-m-l$ into the binomial coefficients, step (b) follows by using the fact that $ \sum^{n}_{l=0}(-1)^l  {n \choose l}=0$.  Again, we note that in step (a), the term $l=M-m$ can be added without changing  the value of the summation since $\tilde{\mu}_4  \phi(p,\tilde{\mu}_2)-\mu_5  \phi(p,\mu_6) =0$ for $l=M-m$.  Comparing \eqref{amx} to \eqref{am}, we note that the expression for $Q_m$ has been simplified.

\subsection{Evaluation of $Q_{M-1}$}
 Recall that  $  {Q}_{M-1} $ can be expressed as follows: 
 \begin{align}  \label{qm-1}
  {Q}_{M-1} &=\underset{|g|^2>\alpha_0} {\mathcal{E}}\left\{{\rm P}\left( |h_{M-1}|^2 <\xi,\right.\right.
\\\nonumber &\left.\left. |h_{M}|^2 >\frac{\alpha_0^{-1}|g|^2-1}{P_s} , |h_M|^2<\alpha_s(1+P_0|g|^2)  \right)\right\}.
\end{align}

Denote  the probability inside of the expectation in \eqref{qm-1} by $S_{M-1}$. Again, by applying \eqref{range},   ${Q}_{M-1} $ can be expressed as follows:
\begin{align} 
  {Q}_{M-1} = &\underset{\alpha_2>|g|^2>\alpha_0} {\mathcal{E}}\left\{ S_{M-1}\right\}.
\end{align}

Unlike $S_m$, $1\leq m \leq M-2$,  $S_{M-1}$ becomes a function of two order statistics,  $|h_{M-1}|^2$ and $|h_{M}|^2$, whose joint pdf is given by 
\begin{align}
&f_{|h_{M-1}|^2,  |h_{M}|^2}(x,y) = \tilde{\eta}_0 e^{-x} \left(1-e^{-x}\right)^{M-2} e^{-y},
\end{align}
where $x\leq y$.  By using this joint pdf, $S_{M-1}$ can be calculated as follows: 
\begin{align}\nonumber
S_{M-1} = & \frac{\tilde{\eta}_0\left(1-e^{-\xi}\right)^{M-1} \left(
 e^{-\frac{\alpha_0^{-1}|g|^2-1}{P_s} } - e^{-\alpha_s(1+P_0|g|^2)} 
 \right)}{M-1}\\\nonumber  = & \sum^{M-1}_{i=0}{M-1 \choose i} (-1)^i \frac{\tilde{\eta}_0}{M-1} e^{-i\xi} \\&\times \left(
  e^{\frac{ 1}{P_s} } e^{-\mu_7 |g|^2} - e^{-\alpha_s } 
 e^{-\mu_8|g|^2} 
 \right). 
\end{align}  

By applying \eqref{define of g}, $ {Q}_{M-1} $ can be obtained as follows:
 \begin{align} \label{am-1}
 {Q}_{M-1} 
 = & \sum^{M-1}_{i=0}{M-1 \choose i} (-1)^i \frac{\tilde{\eta}_0}{M-1}  \\&\times \left(
  e^{\frac{ 1}{P_s} } \phi(i,\mu_7)   - e^{-\alpha_s } 
\phi(i,\mu_8)
 \right).
\end{align}

 \subsection{Evaluation of  $Q_M$}
 Unlike $Q_m$, $1\leq m \leq M-1$, $Q_M$ is a function of $h_M$ and $g$. In particular, recall that $Q_M$ can be expressed as follows:
 \begin{align}
Q_M= &{\rm P}\left(\log\left(1+P_s|h_M|^2\right)<R_s,\right.\\\nonumber &\left.\log\left(1+\frac{P_s|h_M|^2}{P_0|g|^2+1}\right) <R_s, E_M,|g|^2>\alpha_0  \right)\\\nonumber
= &{\rm P}\left(\log\left(1+P_s|h_M|^2\right)<R_s ,|h_{M}|^2<\tau(|g|^2) P_s^{-1}, |g|^2>\alpha_0  \right),
\end{align}
where the last step follows from the fact that $P_s|h_M|^2\geq \frac{P_s|h_M|^2}{P_0|g|^2+1}$.  Therefore, we can rewrite $Q_m$
as follows:
 \begin{align}\nonumber 
Q_M=  &{\rm P}\left( |h_M|^2<\alpha_s ,|h_{M}|^2<\frac{\alpha_0^{-1}|g|^2 -1}{P_s},|g|^2>\alpha_0 \right)
\\ 
=  &{\rm P}\left( |h_{M}|^2<\frac{\alpha_0^{-1}|g|^2 -1}{P_s},\alpha_0 <|g|^2<\alpha_1  \right)
+{\rm P}\left( |h_M|^2<\alpha_s  ,|g|^2>\alpha_1   \right),
\end{align}
where we use the fact that $\alpha_s  <\frac{\alpha_0^{-1}|g|^2 -1}{P_s}$ is guaranteed if $|g|^2>\alpha_1$. By applying the fact that $h_M$ and $g$ are independent, 
 $Q_M$ can be calculated as follows:
  \begin{align}
Q_M=   &\int^{\alpha_1 }_{\alpha_0 }\left(1-e^{-\frac{\alpha_0^{-1}x -1}{P_s}}\right)^M e^{-x}dx
\\\nonumber
 &+\left(1-e^{-  \alpha_s} \right) ^M  e^{- \alpha_1 } .
\end{align}
With some algebraic manipulations,  $Q_M$ can be finally  expressed as follows:
  \begin{align}
Q_M
 = &\sum^{M}_{i=0}{M\choose i} (-1)^i e^{\frac{ i}{P_s}}g_{\frac{i }{\alpha_0P_s}}\left(\alpha_0,\alpha_1\right)
\\\nonumber
 &+\left(1-e^{-  \alpha_s} \right) ^M  e^{- \alpha_1 } .\label{aM}
\end{align}

\subsection{Evaluation of  $Q_0$}
$Q_0$ is surprisingly more complicated  to analyze, compared to $Q_M$. Recall that   $Q_0$ can be expressed as follows:
 \begin{align}
Q_0=& {\rm P}\left(\log\left(1+\frac{P_s|h_M|^2}{P_0|g|^2+1}\right)<R_s ,  |h_{1}|^2>\frac{P_0\epsilon_0^{-1}|g|^2-1}{P_s} ,|g|^2>\frac{\epsilon_0}{P_0}\right)
\\\nonumber 
=& {\rm P}\left( |h_M|^2<\alpha_s(P_0|g|^2+1) ,   |h_{1}|^2>\frac{\alpha_0^{-1}|g|^2-1}{P_s} ,|g|^2>\alpha_0\right).
\end{align}
Again, by applying the fact that the lower bound on $|h_M|^2$ needs to be larger than the upper bound on $|h_1|^2$ as discussed in  \eqref{range}, the probability $Q_0$ can be expressed as follows: 
 \begin{align}\nonumber 
Q_0=&  \underset{\alpha_0<|g|^2<\alpha_2}{\mathcal{E}} \left\{{\rm P}\left( |h_M|^2<\alpha_s(P_0|g|^2+1)  ,  |h_{1}|^2>\frac{\alpha_0^{-1}|g|^2-1}{P_s}  \right)\right\}. \label{q00}
\end{align}
Denote the probability inside   the expectation in \eqref{q00} by $S_0$. $S_0$ is a function of two order statistics, $|h_1|^2$ and $|h_M|^2$, whose joint pdf is given by
\begin{align}
f_{|h_1|^2,|h_M|^2}(x,y)=&
\tilde{\eta}_0 e^{-x}\left(e^{-x}-e^{-y}\right)^{M-2}e^{-y} \\\nonumber= & \tilde{\eta}_0 \sum^{M-2}_{i=0}(-1)^i {M-2 \choose i}e^{-(M-1-i)x} e^{-(i+1)y} ,
\end{align}
for $x\leq y$.  
For a fixed $|g|^2$ and by applying the joint pdf,  $S_0$ can be calculated as follows: 
 \begin{align}
S_0&=  \tilde{\eta}_0 \sum^{M-2}_{i=0}(-1)^i   {M-2 \choose i}\\\nonumber &\times \int^{\alpha_s(P_0|g|^2+1)  }_{\frac{\alpha_0^{-1}|g|^2-1}{P_s}  }e^{-(M-1-i)x}\int^{\alpha_s(P_0|g|^2+1) }_{x}  e^{-(i+1)y} dydx. 
\end{align}
With some algebraic manipulations, $S_0$ can be expressed as follows:
 \begin{align}
S_0&=   \tilde{\eta}_0 \sum^{M-2}_{i=0}(-1)^i {M-2 \choose i} \\\nonumber &\times   
\left(\frac{
e^{\frac{ M}{P_s} }e^{-\mu_{10}|g|^2 }
 - e^{- M\alpha_s  }e^{-\mu_{11}|g|^2 }
}{M(i+1)}\right.\\\nonumber
& -\frac{ 
\left(
e^{\frac{M-1-i}{P_s}- (i+1)\alpha_s  } e^{-\mu_{12}|g|^2} - 
e^{-M \alpha_s  } e^{-\mu_{11}|g|^2 } 
\right)
}{(i+1)(M-1-i)}  ,
\end{align}
where  $\mu_{10}=\frac{M}{\alpha_0P_s}$ and $\mu_{11}= M\alpha_sP_0 $, and $\mu_{12}= (i+1)\alpha_sP_0+ (M-1-i)\frac{\alpha_0^{-1}}{P_s} $.

 By applying the integration result in \eqref{define of g},  $Q_0$ can be expressed as follows: 
 \begin{align}\nonumber
Q_{0}&=   \tilde{\eta}_0 \sum^{M-2}_{i=0}(-1)^i  {M-2 \choose i}\\\nonumber &\times   
\left(\frac{
e^{\frac{ M}{P_s} }g_{\mu_{10}}(\alpha_0,\alpha_2) 
 - e^{- M\alpha_s  }g_{\mu_{11}}(\alpha_0,\alpha_2) 
}{M(i+1)}\right.\\ 
& \left.-\frac{ 
\left(
e^{\frac{M-1-i}{P_s}- (i+1)\alpha_s  } g_{\mu_{12}}(\alpha_0,\alpha_2) - 
e^{-M \alpha_s  } g_{\mu_{11}}(\alpha_0,\alpha_2) 
\right)
}{(i+1)(M-1-i)}  \right).\label{a00}
\end{align}

The expression  in \eqref{a00} is quite involved, and cannot be used directly to obtain a high SNR approximation. In the following, we will show that \eqref{a00} can be simplified. First,  the expression for $Q_0$ is modified as follows:
\begin{align}
Q_{0}=&   \frac{\tilde{\eta}_0}{M-1} \sum^{M-2}_{i=0}(-1)^i  {M-1 \choose i+1}\\\nonumber &\times   
\left(\frac{
e^{\frac{ M}{P_s} }g_{\mu_{10}}(\alpha_0,\alpha_2) 
 - e^{- M\alpha_s  }g_{\mu_{11}}(\alpha_0,\alpha_2) 
}{M}\right.\\\nonumber
& \left.-\frac{ 
\left(
e^{\frac{M-1-i}{P_s}- (i+1)\alpha_s  } g_{\mu_{12}}(\alpha_0,\alpha_2) - 
e^{-M \alpha_s  } g_{\mu_{11}}(\alpha_0,\alpha_2) 
\right)
}{(M-1-i)}  \right) ,
\end{align}
which is obtained by absorbing $i+1$ into the binomial coefficients. The expression for $Q_0$ can be further modified as follows:
\begin{align}\label{high x1}
Q_{0}=&    -  \frac{\tilde{\eta}_0}{M-1} \sum^{M-1}_{l=0}(-1)^l  {M-1 \choose l}\\\nonumber &\times   
\left(\frac{
e^{\frac{ M}{P_s} }g_{\mu_{10}}(\alpha_0,\alpha_2) 
 - e^{- M\alpha_s  }g_{\mu_{11}}(\alpha_0,\alpha_2) 
}{M }\right.\\\nonumber
& \left.-e^{ - l\alpha_s  }\frac{ 
\left(
e^{\frac{M-l}{P_s}   } g_{\tilde{\mu}_{12}}(\alpha_0,\alpha_2) - 
e^{-(M-l) \alpha_s  } g_{\mu_{11}}(\alpha_0,\alpha_2) 
\right)
}{(M-l)}  \right)
\end{align}
which is obtained by substituting  $l=i+1$. We note that  the term $l=0$ can be added without changing the value of the summation, since  the terms in the second and third lines in \eqref{high x1} cancel each other when $l=0$.  

By using the fact that $ \sum^{n}_{l=0}(-1)^l  {n \choose l}=0$, $Q_0$ can be further simplified as follows: 
\begin{align}
Q_{0} 
=&   \frac{\tilde{\eta}_0}{M-1} \sum^{M-1}_{l=0}(-1)^l  {M-1 \choose l}e^{ - l\alpha_s  }\\\nonumber &\times   
\left( \frac{ 
e^{\frac{M-l}{P_s}   } g_{\tilde{\mu}_{12}}(\alpha_0,\alpha_2) - 
e^{-(M-l) \alpha_s  } g_{\mu_{11}}(\alpha_0,\alpha_2) 
}{(M-l)}  \right)
\\\nonumber
=&  \frac{ \tilde{\eta}_0}{M(M-1)} \sum^{M}_{l=0}(-1)^l  {M \choose l}e^{ - l\alpha_s  }\\\nonumber &\times   
\left(  
e^{\frac{M-l}{P_s}   } g_{\tilde{\mu}_{12}}(\alpha_0,\alpha_2) - 
e^{-(M-l) \alpha_s  } g_{\mu_{11}}(\alpha_0,\alpha_2) 
 \right),
\end{align}
where the last step is obtained by absorbing $(M-l)$ into the binomial coefficients.  In addition, we also note that the term $l=M$ can be added since $
e^{\frac{M-l}{P_s}   } g_{\tilde{\mu}_{12}}(\alpha_0,\alpha_2) - 
e^{-(M-l) \alpha_s  } g_{\mu_{11}}(\alpha_0,\alpha_2) =0$ when $l=M$. 
 
Again, by using the fact that $ \sum^{n}_{l=0}(-1)^l  {n \choose l}=0$, $Q_0$ can be further simplified as follows: 
\begin{align}\label{a0}
Q_{0} 
=&    \frac{ \tilde{\eta}_0}{M(M-1)} \sum^{M}_{l=0}(-1)^l  {M \choose l}e^{ - l\alpha_s  }  
e^{\frac{M-l}{P_s}   } g_{\tilde{\mu}_{12}}(\alpha_0,\alpha_2)   .  
\end{align}

$Q_{M+1}$ can be evaluated similarly to $Q_M$ since both are functions of $h_M$ and $g$, and it can be expressed  as follows: 
\begin{align}\nonumber
Q_{M+1}=&{\rm P}\left(\log\left(1+\frac{P_s|h_M|^2}{P_0|g|^2+1}\right)<R_s , |g|^2<\alpha_0\right) \\ =&\sum^{M}_{i=0}{M\choose i}(-1)^ie^{- i\alpha_s }
 \frac{1-e^{-\left(1+ i\alpha_sP_0 \right)\alpha_0 } }{1+ i\alpha_sP_0 }. \label{aM+1}
\end{align}

Therefore, by combining \eqref{am}, \eqref{am-1}, \eqref{aM}, \eqref{a0} and \eqref{aM+1}, the overall outage probability is obtained as shown in the theorem and the proof is complete.

\section{Proof for Theorem \ref{theorem2}}
As discussed in the proof for Theorem $\ref{theorem1}$,    $Q_m$ depends on the value of   $m$. Therefore, the high SNR approximations for different   $Q_m$ will be discussed separately in the following subsections.

\subsection{High SNR Approximation for  $Q_m$, $1\leq m \leq M-2$}
 Among all the terms in \eqref{overall}, the expression for $Q_m$, $1\leq m\leq M-2$, is the most complicated one, as is evident from the proof of Theorem \ref{theorem1}. First, recall  $ {Q}_{m} $ can be   expressed as follows:
  \begin{align} \label{qm simplified}
 {Q}_{m} =&         
\bar{\eta}_m  \sum^{M-m}_{l=0}{M-m \choose l}(-1)^l  \sum^{m}_{p=0}{m \choose p}(-1)^p   \tilde{\mu}_4  \phi(p,\tilde{\mu}_2)  \\\nonumber =&
\bar{\eta}_m  \sum^{M-m}_{l=0}{M-m \choose l}(-1)^l  \sum^{m}_{p=0}{m \choose p}(-1)^p  \\\nonumber &\times   
 \left(\tilde{\mu}_4 e^{-p\alpha_s}g_{\tilde{\mu}_2}(\alpha_1,\alpha_2)   +\tilde{\mu}_4 e^{\frac{ p}{P_s}}  g_{\tilde{\mu}_2+\frac{ p}{P_s\alpha_0}}(\alpha_0,\alpha_1)   \right) .
\end{align}

In order to facilitate a high SNR approximation,  $ {Q}_{m}$ is rewritten as follows:
  \begin{align} 
 {Q}_{m} =&    
\bar{\eta}_m \int^{\alpha_2}_{\alpha_1} \sum^{M-m}_{l=0}{M-m \choose l}(-1)^l  \sum^{m}_{p=0}{m \choose p}(-1)^p    
 \tilde{\mu}_4 e^{-p\alpha_s}e^{-(1+\tilde{\mu}_2)x}   dx   \\\nonumber&+ 
\bar{\eta}_m \int^{\alpha_1}_{\alpha_0} \sum^{M-m}_{l=0}{M-m \choose l}(-1)^l  \sum^{m}_{p=0}{m \choose p}(-1)^p  
 \tilde{\mu}_4 e^{\frac{ p}{P_s}}  e^{-(1+\tilde{\mu}_2+\frac{ p}{P_s\alpha_0})x}dx ,
\end{align}

By applying the   approximation, $e^{-x}\approx 1-x$ for $x\rightarrow 0$ and also using the definitions of $\tilde{\mu}_2$ and $\tilde{\mu}_4$, $Q_m$ can be approximated as follows:
 \begin{align} 
 {Q}_{m}  
 \approx&    
\bar{\eta}_m \int^{\alpha_2}_{\alpha_1} \sum^{M-m}_{l=0}{M-m \choose l}(-1)^l  \sum^{m}_{p=0}{m \choose p}(-1)^p  \\\nonumber &\times   
e^{-l\alpha_s-\frac{ l}{P_s} }e^{-p\alpha_s}e^{-( l\epsilon_s-l \epsilon_0^{-1})x}   dx   \\\nonumber&+ 
\bar{\eta}_m \int^{\alpha_1}_{\alpha_0} \sum^{M-m}_{l=0}{M-m \choose l}(-1)^l  \sum^{m}_{p=0}{m \choose p}(-1)^p  \\\nonumber &\times   
e^{-l\alpha_s-l\frac{ 1}{P_s} } e^{\frac{ p}{P_s}}  e^{-( l\epsilon_s-l \epsilon_0^{-1}+\frac{ p}{P_s\alpha_0})x}dx.
\end{align}

By    using the fact that $ \sum^{n}_{l=0}(-1)^l  {n \choose l}a^l=(1-a)^n$, the approximation of  $Q_m$ can be further simplified  as follows:
  \begin{align} 
 {Q}_{m}  
 \approx&    
\bar{\eta}_m\left(1-e^{-\alpha_s}\right)^{m} \int^{\alpha_2}_{\alpha_1} \sum^{M-m}_{l=0}{M-m \choose l}(-1)^l    
e^{-l\alpha_s-\frac{ l}{P_s} }e^{-( l\epsilon_s-l \epsilon_0^{-1})x}   dx   \\\nonumber&+ 
\bar{\eta}_m  \int^{\alpha_1}_{\alpha_0} \left(1-e^{\frac{ 1}{P_s}-\frac{ x}{P_s\alpha_0}} \right)^m\sum^{M-m}_{l=0}{M-m \choose l}(-1)^l    
e^{-l\alpha_s-l\frac{ 1}{P_s} }  e^{-( l\epsilon_s-l \epsilon_0^{-1})x}dx
\\\nonumber
 =&    
\bar{\eta}_m\left(1-e^{-\alpha_s}\right)^{m} \int^{\alpha_2}_{\alpha_1}      
\left(1-e^{-(\alpha_s+\frac{ 1}{P_s} +( \epsilon_s- \epsilon_0^{-1})x)} \right)^{M-m} dx   \\\nonumber&+ 
\bar{\eta}_m \int^{\alpha_1}_{\alpha_0}\left(1-e^{\frac{ 1}{P_s}-\frac{ x}{P_s\alpha_0}} \right)^m 
\left(1-e^{-(\alpha_s+\frac{ 1}{P_s} +( \epsilon_s- \epsilon_0^{-1} )x}\right)^{M-m}dx.
\end{align}

A more simplified  form of $ {Q}_{m} $ can be obtained by carrying out the following high SNR approximations: 
  \begin{align} 
 {Q}_{m}  
 \approx&     
\bar{\eta}_m\left(1-e^{-\alpha_s}\right)^{m} \int^{\alpha_2}_{\alpha_1}       
\left( \alpha_s+\frac{ 1}{P_s} +( \epsilon_s- \epsilon_0^{-1})x  \right)^{M-m} dx   \\\nonumber&+ 
\frac{\bar{\eta}_m}{P_s^m\alpha_0^m} \int^{\alpha_1}_{\alpha_0}\left( x-\alpha_0 \right)^m    \left( \alpha_s+\frac{ 1}{P_s} +( \epsilon_s- \epsilon_0^{-1}  )x\right)^{M-m}dx ,
\end{align}
 With some algebraic manipulations,  the high SNR approximation for $Q_m$ can be obtained as follows:
  \begin{align}\nonumber 
 {Q}_{m}  
 \approx       &
\frac{\bar{\eta}_m}{P_s^{M+1}}\epsilon_s^{m}\sum^{M-m}_{i=0} {M-m \choose i}  \left( \epsilon_s+1  \right)^{M-m-i}  \left(  \epsilon_s- \epsilon_0^{-1}   \right)^{i}  \\\nonumber &\times   
\epsilon_0^{i+1}\frac{\tilde{\alpha}_2^{i+1}-(1+\epsilon_s)^{i+1}}{i+1}   + 
\frac{\bar{\eta}_m}{P_s^{M+1}\epsilon_0^m} \sum^{M-m}_{i=0} {M-m \choose i}  \\ \label{aam} &\times   \left( \epsilon_s+\epsilon_0 \epsilon_s  \right)^{M-m-i}  
\left(   \epsilon_s- \epsilon_0^{-1}  \right)^{i}\epsilon_0^{m+i+1}\frac{\epsilon_s^{m+i+1}}{m+i+1}  .
\end{align}

\subsection{High SNR Approximation for $Q_0$}
To facilitate the  asymptotic analysis,  $Q_0$ can be   rewritten  as follows: 
\begin{align}\nonumber
Q_{0} 
=&    \frac{ \tilde{\eta}_0}{M(M-1)} \sum^{M}_{l=0}(-1)^l  {M \choose l}e^{ - l\alpha_s  }  
e^{\frac{M-l}{P_s}   } g_{\tilde{\mu}_{12}}(\alpha_0,\alpha_2)  
\\\nonumber
=&    \frac{ \tilde{\eta}_0}{M(M-1)} \sum^{M}_{l=0}(-1)^l  {M \choose l}e^{ - l\alpha_s  }  
e^{\frac{M-l}{P_s}   }  \frac{e^{-(1+\tilde{\mu}_{12}) \alpha_0}-e^{-(1+\tilde{\mu}_{12}) \alpha_2}}{1+\tilde{\mu}_{12}} .  
\end{align}

To carry out high SNR approximations,  $Q_{0} $ can be first rewritten   as follows:
  \begin{align}
Q_{0} 
=&       \frac{ \tilde{\eta}_0}{M(M-1)} \sum^{M}_{l=0}(-1)^l  {M \choose l}e^{ - l\alpha_s  }  
e^{\frac{M-l}{P_s}   }    \int^{\alpha_2}_{\alpha_0} e^{-(1+\tilde{\mu}_{12})x}dx 
\\\nonumber=&        \frac{ \tilde{\eta}_0}{M(M-1)}e^{\frac{M}{P_s}   } \int^{\alpha_2}_{\alpha_0}e^{-x} e^{- M\epsilon_0^{-1} x}\sum^{M}_{l=0}(-1)^l  {M \choose l}
 \\\nonumber
&\times  e^{ - l\left(\alpha_s + \frac{1}{P_s}+\epsilon_sx -\epsilon_0^{-1} x\right)}    dx  .
\end{align}
By using the fact that $ \sum^{n}_{l=0}(-1)^l  {n \choose l}a^l=(1-a)^n$, $Q_0$ can be expressed as follows:
  \begin{align}
Q_{0} 
=&    
       \frac{ \tilde{\eta}_0}{M(M-1)}e^{\frac{M}{P_s}   } \int^{\alpha_2}_{\alpha_0} 
   \left(1-e^{ - l\left(\alpha_s + \frac{1}{P_s}+\epsilon_sx -\epsilon_0^{-1} x\right)}   \right)^M dx 
\\\nonumber\approx& 
       \frac{ \tilde{\eta}_0}{M(M-1)}e^{\frac{M}{P_s}   } \int^{\alpha_2}_{\alpha_0}   \left( \alpha_s + \frac{1}{P_s}+\epsilon_sx -\epsilon_0^{-1} x  \right)^M dx ,
\end{align}
where the last step is obtained by applying the following approximation, $e^{-x}\approx 1-x$ for $x\rightarrow 0$. By applying the binomial expansion,  $Q_0$ can be expressed as follows:
 \begin{align}
Q_{0} \nonumber
\approx&     
       \frac{ \tilde{\eta}_0}{M(M-1)}e^{\frac{M}{P_s}   } \sum^{M}_{i=0}{M\choose i}\left(\alpha_s + \frac{1}{P_s} \right)^{M-i}
      \left( \epsilon_s -\epsilon_0^{-1} \right)^i  \int^{\alpha_2}_{\alpha_0}x^i     dx 
\\\nonumber
=&    
       \frac{ \tilde{\eta}_0}{M(M-1)}e^{\frac{M}{P_s}   } \sum^{M}_{i=0}{M\choose i}\left(\alpha_s + \frac{1}{P_s} \right)^{M-i}
 \\ \label{aa0}
&\times     \left( \epsilon_s -\epsilon_0^{-1} \right)^i     \frac{\alpha_2^{i+1}-\alpha_0^{i+1}}{i+1}.  
\end{align}

\subsection{High SNR Approximation for $Q_{M-1}$}
First, we recall that $Q_{M-1}$ can be expressed as follows:
 \begin{align} 
{Q}_{M-1}    =  &\sum^{M-1}_{i=0}{M-1 \choose i} (-1)^i \frac{\tilde{\eta}_0}{M-1}  \\\nonumber&\times \left(
  e^{\frac{ 1}{P_s} }  e^{-i\alpha_s}g_{\mu_7}(\alpha_1,\alpha_2)   +e^{\frac{ 1}{P_s} } e^{\frac{ i}{P_s}}  g_{\mu_7+\frac{ i}{P_s\alpha_0}}(\alpha_0,\alpha_1)\right.
  \\\nonumber &\left. - e^{-\alpha_s } 
  e^{-i\alpha_s}g_{\mu_8}(\alpha_1,\alpha_2)   -e^{-\alpha_s } e^{\frac{ i}{P_s}}  g_{\mu_8+\frac{ i}{P_s\alpha_0}}(\alpha_0,\alpha_1)
 \right).
\end{align} 
 
 In order to obtain  the high SNR approximation, we first express $Q_{M-1}$ as follows: 
 \begin{align} 
 {Q}_{M-1}  
  = & \sum^{M-1}_{i=0}{M-1 \choose i} (-1)^i \frac{\tilde{\eta}_0}{M-1}  \\&\times \left(
\int^{\alpha_2}_{\alpha_1}  e^{\frac{ 1}{P_s} }  e^{-i\alpha_s}e^{-(1+ \mu_7)x} dx    +\int^{\alpha_1}_{\alpha_0} e^{\frac{ 1}{P_s} } e^{\frac{ i}{P_s}}  e^{-(1+ \mu_7+\frac{ i}{P_s\alpha_0})x}dx
  \right.
  \\\nonumber &\left. - \int^{\alpha_2}_{\alpha_1}e^{-\alpha_s } 
  e^{-i\alpha_s}e^{-(1+ \mu_8)x} dx   -\int^{\alpha_1}_{\alpha_0}e^{-\alpha_s } e^{\frac{ i}{P_s}}  e^{-(1+ \mu_8+\frac{ i}{P_s\alpha_0})x}dx
 \right) .\end{align}

 By using the fact that $ \sum^{n}_{l=0}(-1)^l  {n \choose l}=0$, $ {Q}_{M-1}   $ can be further expressed as follows:
 \begin{align} \nonumber
& {Q}_{M-1}   
 =    \frac{\tilde{\eta}_0}{M-1}   \left(e^{\frac{ 1}{P_s} }  \left(1-e^{-\alpha_s}\right)^{M-1}
\int^{\alpha_2}_{\alpha_1}  e^{-(1+ \frac{1}{P_s\alpha_0})x} dx  \right.\\\nonumber
  &\left. +\int^{\alpha_1}_{\alpha_0} e^{\frac{ 1}{P_s} } \left(1-e^{\frac{ 1}{P_s}-\frac{ 1}{P_s\alpha_0}x}\right)^{M-1}  e^{-(1+ \frac{1}{P_s\alpha_0})x}dx
  \right.
  \\\nonumber &\left. -e^{-\alpha_s } 
  \left(1-e^{-\alpha_s}\right)^{M-1} \int^{\alpha_2}_{\alpha_1}e^{-(1+ \alpha_sP_0)x} dx\right. \\  &\left.  -\int^{\alpha_1}_{\alpha_0}e^{-\alpha_s } \left(1-e^{\frac{ 1}{P_s}-\frac{ 1}{P_s\alpha_0}x}  \right)^{M-1}e^{-(1+ \alpha_sP_0)x}dx
 \right).
\end{align} 

Directly applying the approximation, $e^{-x}\approx 1-x$ for $x\rightarrow 0$, to the above equation results in a very complicated form. In order to facilitate the high SNR approximation, we rearrange the four terms in the above equation as follows:
  \begin{align} \nonumber
 {Q}_{M-1}   
 = &   \frac{\tilde{\eta}_0}{M-1}   \left(  \left(1-e^{-\alpha_s}\right)^{M-1}\right.\\\nonumber&\times
 \left(
\int^{\alpha_2}_{\alpha_1}  e^{\frac{ 1}{P_s}-(1+ \frac{1}{P_s\alpha_0})x} -e^{-\alpha_s -(1+ \alpha_sP_0)x} dx\right)   \\\nonumber
  &\left. +\int^{\alpha_1}_{\alpha_0}  \left(1-e^{\frac{ 1}{P_s}-\frac{ 1}{P_s\alpha_0}x}\right)^{M-1}  
  \right.
  \\ \nonumber &\left.  \times\left(e^{\frac{ 1}{P_s}-(1+ \frac{1}{P_s\alpha_0})x} -   e^{-\alpha_s-(1+ \alpha_sP_0)x}\right)dx
 \right) .
\end{align} 
By applying  the approximation, $e^{-x_1}-e^{-x_2}\approx x_2-x_1$ for $x_1\rightarrow 0$ and $x_2\rightarrow 0$, $ {Q}_{M-1}  $ can be approximated as follows:
 \begin{align}  
 {Q}_{M-1}    \label{qm-dd}
  \approx&   \frac{\tilde{\eta}_0}{M-1}   \left(  \left(1-e^{-\alpha_s}\right)^{M-1}\right.\\\nonumber&\times
 \left(
\int^{\alpha_2}_{\alpha_1}  \left(\frac{ 1}{P_s}- \frac{1}{P_s\alpha_0}x +\alpha_s + \alpha_sP_0x\right) dx\right)   \\\nonumber
  &\left. +\int^{\alpha_1}_{\alpha_0}  \left(1-e^{\frac{ 1}{P_s}-\frac{ 1}{P_s\alpha_0}x}\right)^{M-1}  
  \right.
  \\ \nonumber &\left.  \times\left(\frac{ 1}{P_s}- \frac{1}{P_s\alpha_0}x+\alpha_s+ \alpha_sP_0x\right)dx
 \right).
\end{align}

The approximation shown in \eqref{qm-dd} can be further approximated as follows: 
 \begin{align} \nonumber
 {Q}_{M-1}    
  \approx&   \frac{\tilde{\eta}_0}{M-1}   \left(  \left(1-e^{-\alpha_s}\right)^{M-1} 
 \left(
\int^{\alpha_2}_{\alpha_1}  \left(\frac{ 1}{P_s}  +\alpha_s  \right) dx\right) \right.  \\\nonumber
  &\left. +\int^{\alpha_1}_{\alpha_0}  \left(1-e^{\frac{ 1}{P_s}-\frac{ 1}{P_s\alpha_0}x}\right)^{M-1}  
  \times\left(\frac{ 1}{P_s} +\alpha_s\right)dx
 \right)\\ \label{aaM-1}    =&   \frac{\tilde{\eta}_0 \left(1+\epsilon_s\right)\epsilon_s ^{M-1} 
 \epsilon_0}{P_s^{M+1}(M-1)}    \left(  (\tilde{\alpha}_2 -1-\epsilon_s )  +\frac{\epsilon_s}{M}
 \right).
\end{align} 
 
 Following similar  steps as for the approximation of $Q_m$, $1\leq m \leq M-1$,  $Q_M$ can be approximated as follows:
  \begin{align}\label{aaM}
Q_M   \approx &\frac{1}{(M+1)\epsilon_0^M}   \alpha_0^{M+1}  \epsilon_s^{M+1}
+   \alpha_s  ^M  ,
\end{align} 
and $Q_{M+1}$ can be approximated as follows:
\begin{align}\label{aaM+1}
Q_{M+1}&   \alpha_s ^M \frac{(1+\epsilon_0)^{M+1}-1}{P_0(M+1)}. 
\end{align}
 
 By combining \eqref{aam}, \eqref{aa0},\eqref{aaM-1}, \eqref{aaM} and \eqref{aaM+1}, the high SNR approximation for ${\rm P}_{out}$ can be obtained and the proof for Theorem \ref{theorem2} is complete. 
    \bibliographystyle{IEEEtran}
\bibliography{IEEEfull,trasfer}
 
   \end{document}